\newcommand{\cD}{\mathcal{P}}
\newcommand{\cF}{\mathcal{F}}
\newcommand{\cL}{\mathcal{L}}
\newcommand{\cM}{\mathcal{M}}
\newcommand{\cP}{\mathcal{P}}
\newtheorem{definition}{Definition}
\newtheorem{example}{Example}
\newtheorem{proposition}{Proposition}
\newtheorem{lemma}{Lemma}
\newtheorem{corollary}{Corollary}
\newtheorem{theorem}{Theorem}
\newcommand{\LangInt}{\cL}   
\newcommand{\LangCPL}{\cL_{CPL}} 
\newcommand{\Prop}{\mathsf{Prop}} 
\newcommand{\I}{I}  
\newcommand{\imp}{\rightarrow}  
\newcommand{\biimp}{\leftrightarrow}
\newcommand{\Dia}{\Diamond}
\newcommand{\Log}{\mathsf{Log}}
\newcommand{\Nec}{\boxplus} 
\newcommand{\s}{s}  
\newcommand{\D}{P}  
\newcommand{\PModel}{(P, \oplus, \s)} 
\newcommand{\Frame}{(W, R, \cD, f)} 
\newcommand{\Model}{(W, R, \cD, f, V)} 
\newcommand{\CModel}{\langle [\Gamma_0]_\Nec, R^c, \cD^c, f^c, V^c \rangle}  
\newcommand{\br}[1]{[\![ #1]\!]} 
\newcommand{\daniil}[1]{{\color{red}{Daniil: #1}}}
\newcommand{\REF}[1]{{\color{red}{[REF]}}}
\newcommand{\Add}[1]{{\color{red}{Add: #1}}}
\title{Hyperintensional Intention}
\author{Daniil Khaitovich
\institute{ILLC, Univesity of Amsterdam\\ Amsterdam, The Netherlands}
\email{d.khaitovich@uva.nl}
\and
Ayb\"uke \"Ozg\"un
\institute{ILLC, Univesity of Amsterdam\\ Amsterdam, The Netherlands}
\email{a.ozgun@uva.nl}
}
\begin{document}

\maketitle

\begin{abstract}

\textbf Intentions are crucial for our practical reasoning. The rational intention obeys some simple logical principles, such as agglomeration and consistency, among others, motivating the search for a proper logic of intention. However, such a logic should be weak enough not to force the closure under entailment; otherwise, we cannot distinguish between intended consequences of agents' choices and their unintended side-effects.  In this paper we argue that we should avoid not only the closure under entailment, but the weaker closure under equivalence as well. To achieve this, we develop a hyperintensional logic of intention, where what an agent intends is constrained by the agent's decision problem. The proposed system combines some elements of inquisitive and topic-sensitive theories of  intensional modals. Along the way, we also show that the existing closest relatives of our framework overgenerate validities by validating some instances of closure under equivalence. Finally, we provide a sound and strongly complete axiomatization for this logic.

\end{abstract}
\section{Introduction}

Intentions do not obey many of the {\em closure principles} logicians bake in in their formalisms. While we are rationally committed to intend some of the consequences of what we intend, we are not rationally committed to intend, e.g., all logical, known, or believed consequences of our intentions. To cite a classic example \cite[p. 218]{cohen1990intention}, one intends to go to the dentists to get one's tooth filled, believing or  knowing that filling tooth will lead to being in pain (maybe because the agent doesn't know about anesthetics or they have a high tolerance to anesthetics or they cannot take much of them due to other health issues), while not intending to be in pain. Agents not only do not intend the unforeseen side-effects of their actions, but they also do not intend all of their foreseen side-effects \cite{bratman1987intention,cohen1990intention}.

This has already been well documented in the literature on logics of intentions \cite{cohen1990persistence,konolige1993representationalist,rao1997modeling,pollack1991overloading}, and much ink has been spilled to meet the challenge of finding a logical formalism for intentions that neither over- nor undergenerates the {\em intended} consequences of one's intentions  such that the overgeneration problem is solved in a principled, well-motivated way \cite{chen1999logic,beddor2023question,roy2008thinking,van1996formalising}. This is exactly what this paper aims to do. It provides a semantics for intentions that at least alleviates, if not completely avoids, the overgeneration problem - also known as {\em the problem of side-effects} \cite{Finnis_1991} - for the right reasons and that explains which closure principles should be preserved.

The key idea we fall back on to achieve our goal is the so-called {\em issue-relevance} \cite{berto2018aboutness,Kroon2024}  or {\em question-sensitivity} \cite{minica2011dynamic,yalcin2018belief,hoek2022questions,baltag2018group} of propositional mental states, which has been explored extensively in a body of rapidly growing literature. 
Question-sensitivity of, e.g.,  epistemic/doxastic attitudes can explain the failure of some closure principles by modeling these attitudes as dependent on inquiry relevant to an agent's epistemic/doxastic agenda, alleviating the infamous problem of logical omniscience \cite{stalnaker1991problem,vardi1986epistemic,hawke2020fundamental} or accounting for the effects of inquiry on a rational, idealized agent's knowledge and belief \cite{baltag2018group,Kroon2024}. In a similar vein, observing the parallel between the problems of logical omniscience and side-effects, recent work \cite{beddor2023question} developed a question-sensitive theory of rational intentions, modeling intentions as dependent on the agent's practical question {\em what to do?}, namely, their decision problem. We take our cue from this theory but argue that it still overgenerates validities: it is still subject to the problem of side-effects. To briefly explain (and to be further elaborated below), the purely possible-worlds, partition-based mechanism employed in \cite{beddor2023question} avoids closure under logical, necessary or known/believed entailments but still validates closure under (undesired instances of) logical and necessary {\em equivalents} for intentions. We provide compelling examples that challenge closure under logical, necessary, and known/believed equivalents and, in turn, argue that a logic of intentions should also avoid these closure principles of equivalents. In particular, it should be {\em hyperintensional}.

Our proposal employs tools from topic- or subject matter-sensitive semantics for knowledge and belief \cite{ozgun2021dynamic,berto2022topics,hawke2020fundamental} and use them to model {\em decision problems} in a finer, hyperintensional way.  We fully develop this formalism based on a bi-modal language with a global and an intention modality, and provide a sound and strongly complete axiomatization for the proposed hyperintensional logic of intention.

The paper is organized as follows. In Section 2, we present the problem of side-effects and challenges it poses for any logic of intention: we argue that closure under (logical, necessary and known/believed) equivalence is too strong for the logic of intention and list some principles that the logic of intention should validate. In Section 3, we review the closest relatives of our framework, namely the proposals in \cite{beddor2023question}, and show that they do not satisfy the requirements we state in Section 2. In Section 4, we present a hyperintensional logic of intention that is weak enough not to force the closure under equivalence, but strong enough to validate all ``good'' principles. The framework is closely related to topic-sensitive theories of doxastic propositional attitudes \cite{ozgun2021dynamic,berto2018aboutness}.  Moreover, we provide a sound and strongly complete axiom system for the hyperintensional logic of intention. As the proofs are not central to the conceptual contributions of this work, the longer proofs are omitted from the main body of the paper and presented in the appendices.

\section{Logics of Intension and The Problem of Side-Effects}\label{sec:side-affects}

Formal epistemologists and philosophical logicians often turn to Kripke semantics to model propositional mental attitudes such as knowledge, belief and, of particular interest for us in this paper, {\em intention}. As well known, however, this mainstream approach validates certain closure principles for these attitudes that are fit to model only highly idealized knowers and believers. As also emphasized, e.g., in \cite{konolige1993representationalist}, these closure principles are a more serious threat for intention, as they cannot be defended for intentions even as idealizations. In this paper, we will focus on the following closure principles, formulated for intentions via operator $
I$:
\begin{enumerate}
\item\label{CLE} {\bf Closure under logical entailment:} $ \models \varphi\rightarrow \psi$ then $ \models \I\varphi \rightarrow \I\psi$
\item\label{AE} {\bf Necessary entailment:} $\models \Nec(\varphi \rightarrow \psi)\rightarrow (\I\varphi \rightarrow  \I\psi)$
\item\label{CLEq} {\bf Closure under logical equivalence:} $\models \varphi \leftrightarrow \psi$ then $\I\varphi \leftrightarrow  \I\psi$
\item\label{AEq}  {\bf Necessary equivalence:} $\models \Nec(\varphi \leftrightarrow \psi)\rightarrow (\I\varphi \leftrightarrow  \I\psi)$
\end{enumerate}

A few notes on notation is warranted. In the formulation of principles \ref{CLE}-\ref{AEq}, $\models$ represents the logical consequence relation of the relevant logic, $\I\varphi$ is read as ``the agent intends that $\varphi$''. $\boxplus\varphi$ is read as ``it is a priori that $
\varphi$'' or ``the agent knows that $\varphi$'', depending on the context and examples we target. The reason why we do not introduce distinct operators for knowledge and a priori necessity is merely practical and the lack of such distinct operators does not bear on any substantive conceptual points we want to make in this work.  In our formalism, we interpret $\boxplus$ as the global modality, closer to the standard interpretation of a priori necessity. Introducing an additional knowledge operator $K$ interpreted on a subset of the logical space, representing epistemically accessible worlds of the agent (as in, e.g., the Kripke semantics of epistemic logic) does not make any difference for the kinds of logical principles we focus on in this work.\footnote{This claim assumes that the set of so-called conative alternatives w.r.t. which the operator $\I$ is interpreted is a subset of the epistemically accessible worlds. Moreover, one can also read $\boxplus$ as belief. In this extended abstract, we often refer to knowledge to keep the presentation more concise.} This much formalism should be sufficient for our exposition in Sections \ref{sec:side-affects} and \ref{sec:partitions}. All these notions and notation will be properly introduced in Section \ref{sec:logic}.

 We say that  a logic {\em suffers from the problem of side-effects} if it validates at least one of \ref{CLE}-\ref{AEq} (even when $\varphi$ and $\psi$ are propositional variables). Just as an epistemic agent can be non-omniscient for diverse reasons \cite{berto2022topics,hawke2020fundamental}, the problem of side-effects can be explained by different factors. Since the early stages of intention logic research, principles \ref{CLE} and \ref{AE} have been considered undesirably strong  \cite{konolige1993representationalist,chen1999logic}. There are at least two factors explaining the failure of these principles:  
\begin{enumerate}
    \item  \textit{Epistemic flaws}: One can intend that $\varphi$ only if one has the epistemic resources to reason about $\varphi$. Agents cannot derive and do not always know/believe all the consequences of what they intend, thus, they do not always intend all necessary or logical consequences of what they intend. 
\end{enumerate}    
    
\noindent For example, one may intend to maintain a vegan diet without intending to avoid eating anything that contains cysteine, simply because one has no idea what cysteine is or does not know that it is made out of animal products.\footnote{This example is inspired by Stalnaker's famous William III example for belief \cite{stalnaker1984inquiry}, which is also used in \cite[p. 348] {beddor2023question} for intention to make a similar point.}
\begin{enumerate}[resume]
    \item \textit{Control constraint}:  One can intend that $\varphi$ only if one believes that one has control over $\varphi$. So that, if one intends  that $\varphi$, $\varphi$  entails $\psi$, but  one does not believe that one has control over $\psi$, one will not intend $\psi$.   \cite[p.357]{beddor2023question}.
\end{enumerate}



Epistemic flaws are not specific to intention and are widely studied in formal theories of knowledge and belief (see, e.g., \cite{Fagin2003,vardi1986epistemic,stalnaker1984inquiry,hawke2020fundamental}). On the contrary, control constraint is a unique feature of intention and similar motivational attitudes.
Cases of the failure of closure under entailment due to control constraint  have been discussed in the existing literature \cite{roy2008thinking,beddor2023question}. Authors often point to \textit{foreseeing without intending}: one may intend to do some action, foresee some necessary consequences of that action (hence know that the intended action entails such consequences), yet still not intend  these consequences \cite{aune1966intention}. We focus on the less discussed  cases, where even the weaker principle of closure under equivalence fails.\footnote{In \cite{chen1999logic}, authors approach the issue from the perspective of resource-bounded agents who can only conceive the so-called {\em cognitively finite objects}. Instead of possible worlds, which are maximally consistent sets of formulae that might lack finite cognitive representations,  they use \textit{cognitive abstractions}, i.e. consistent (but not necessarily maximally consistent) set of formulae that are  built from finitely many propositional variables. This solution allows one to avoid the closure under logical equivalence, but is not sufficient to explain cases of foreseeing without intending: the agent may have enough cognitive resources to grasp and foresee some consequence of their intention, but might still not intend it. A similar worry can also be raised against  \cite{van1996formalising}, where all undesired closure principles are avoided via the machinery of awareness functions. This is conceptually insufficient, since, as foreseeing without intending demonstrates,  agents might be aware of some unintended consequences of their intended actions. A similar solution based solely on syntactic restriction is given in \cite{van2007towards}: closure under logical equivalence (and entailment) is restricted by  arbitrary \textit{practical rules of reasoning}, by means of which we can avoid not only undesired instances of such closure principles, but also virtually any closure, such as agglomeration.}

Below we present a few examples motivating the failure of \ref{CLEq}-\ref{AEq} exactly due to control constraint. In all such cases, the bearer of intention intends that $\varphi$, knows that $\varphi$ is (necessarily or materially) equivalent to $\psi$, but does not intend that $\psi$. Since the equivalences are known by the agent in question, such failures cannot be explained by epistemic flaws.

\begin{example}
Imagine a patient who suffers from a severe dental disease. The doctor tells the patient that without any treatment, he will certainly lose his teeth, but the treatment decreases the chance only by 50\%. By agreeing to the treatment, the patient intends to save his teeth with 50$\%$ chance, but he doesn't intend to lose his teeth with 50$\%$ chance.\footnote{Similar examples can be found in the literature on framing effects \cite{kahneman1984choices}.}
\label{exmpl:failure_of_cl_under_log_equiv}
\end{example}

\begin{example}
In some monarchy, it is obligatory to pay taxes. A citizen of the monarchy does not want to face any legal consequences of tax avoidance, hence, she  intends to pay taxes. The  monarch necessarily spends tax money on official ceremonies. In this context, the  propositions ``\textit{one's taxes are paid}" and ``\textit{one's paid taxes are spent on official ceremonies}" are  necessarily equivalent (in all the worlds that comply with the customs of the monarchy): if one pays taxes, then the monarch spends them on ceremonies, while the spending one's paid taxes on ceremonies entails that one's taxes are paid. Being familiar with the customs of the monarch, this equivalence is known by the citizen. Nevertheless, the citizen may not intend that her  paid taxes  are spent on official ceremonies, despite intending to pay taxes.
\label{exmpl:pacifist_taxpayer} 
\end{example}

\noindent We cannot explain these failures of the closure under equivalence by epistemic flaws. Instead, they can  be explained via control constraint.\footnote{While early theories of agency  define control strictly in terms of causation \cite{chisholm2018human} -- the agent has control over $\varphi$ iff they can force $\varphi$ by their actions -- the more recent accounts refine these definitions, connecting control with explanatory relations.  In \cite{KelleyForthcoming-KELACT-3}, it is argued that an agent has control over $\varphi$ only if the agent can act in a way that will lead to $\varphi$ and the explanation why $\varphi$ will obtain is centered around the actions of the agent. This leads to the failure of the closure under equivalents: it is possible that $\varphi$ holds iff  so does $\psi$, but explanations why those propositions are true differ. Explanation is often seen  as an example of hyperintensional phenomena. All mathematical truths are necessary true and hence logically and a priori equivalent, nevertheless, they have different explanations. Such phenomena occur with contingent propositions as well: ``\textit{it is true that grass is green}'' is explained by ``\textit{the grass is green}'', but not vise versa, while two propositions are obviously necessarily equivalent. Since the metaphysics and logic of explanation are out of the scope of the present paper, we do not elaborate further on this argument. See, e.g., \cite[p.157]{nolan2014hyperintensional} and \cite{schnieder2011logic} for further discussion.}  Intuitively, the patient has control over the decrease of the chance of teeth loss by 50\%, since there is an option to take the treatment, but has no control over the fact that there still will be 50\%  chance of losing the teeth. The taxpayer controls whether her taxes are paid or not, but it is not in her control to prevent the monarch from spending her tax money not to her liking.\footnote{As pointed out by one of the reviewers, the last case resembles familiar examples of intentions that involve the actions of others \cite[Chapter 8]{bratman1999faces}. Since our current framework is single-agent, what matters for the purposes of this paper is whether the agent has control over the relevant proposition -- regardless of whether the lack of control stems from another agent’s actions or from something else. Exploring how a lack of control arises from the actions of others in a multi-agent extension of this framework is an interesting direction for future work, which we plan to pursue in a follow-up paper.}

Therefore, the control constraint poses a challenge not only to the principles of closure under entailment (\ref{CLE} \& \ref{AE}), but also to the principles of closure under equivalences (\ref{CLEq} \& \ref{AEq}). A logic of rational intention should avoid them. Yet, there are a number of simple and intuitive logical principles that rational intention should validate; that is, a logic of intention is still possible. For example, the restricted versions of the above mentioned closure principles, constrained to the cases where the agent has control over both propositions in question and there are no epistemic obstacles, should be validated. Moreover, it is broadly endorsed that rational intentions are consistent (if one intends that $\varphi$, one doesn't intend that $\neg \varphi$) \cite{cohen1990intention,konolige1993representationalist,chen1999logic,roy2008thinking}  and they agglomerate (if one intends that $\varphi$ and one intends that $\psi$, then one intends that $\varphi \land \psi$) \cite{bratman1987intention,beddor2023question}.\footnote{Admittedly, the agglomeration principle for intention is more contentious. \cite{sverdlik1996consistency}, for example, argues that the unqualified agglomeration principle pressures agents to have ``{\em one enormous compound intention}'' [p. 517] that guide their actions, which is arguably implausible and might hinder their ability to satisfy their relevant desires. Still, there are ways to defend the agglomeration of intentions against such objections: see, e.g.,  \cite{zhu2010principle} for an elaborate discussion. In this paper we adopt the agglomeration principle and leave a more detailed analysis of it for future work.} 

\section{Formalizing Decision Problems as Partitions}\label{sec:partitions}


One of the recent solutions to the  problem of side-effects  borrows another tool from formal epistemology -- question-sensitivity. Some have argued that our knowledge and beliefs exist in the context of \textit{questions} \cite{stalnaker1984inquiry,yalcin2018belief,baltag2018group,Kroon2024}. Likewise, one can argue that our intentions exist in the context of \textit{decision problems}. Just like any belief we hold is an answer to some question, any intention we have is a solution to a problem ``\textit{what to do?"}  \cite{beddor2023question}. Question-sensivity explains why  we may not intend  some consequences of our intentions, even when we know what follows from them: some propositions  might be \textit{undefined} in the context of our decision problems; they might not constitute (partial) solutions to them.  

How to formally model a decision problem? So far, authors take a decision problem to be a question and formalize it in terms of a partition of a logical space  \cite{beddor2023question,hoek2022questions,baltag2018group,van2012toward,minica2011dynamic}:

\begin{definition}[Decision problem as partition]
Given a non-empty set of possible worlds  $W$, a decision problem $\Pi$ is a partition of $W$.\footnote{A partition $\Pi$ of $W$ is a subset $\Pi\subseteq 2^W$ such that $\emptyset\not \in \Pi$, $\bigcup\Pi=W$, and for any $X, Y\in \Pi$ such that $X\not = Y$, $X\cap Y=\emptyset$.} 
\label{def:questions_as_partitions}    
\end{definition}
\begin{definition}[Complete and partial solutions] 
Every cell $X \in \Pi$ is a complete solution to the decision problem $\Pi$. A union of any set of cells $A = \bigcup_{i \in I} X_i$ for some  $\{X_i\}_{i \in I} \subseteq \Pi$ is a partial solution to $\Pi$.  
 \label{def:solutions_resolutions}
\end{definition}

It is easy to see that every complete solution $X \in \Pi$ is a partial solution to $\Pi$: $X=\bigcup\{X\}$ with $\{X\}\subseteq \Pi$. 

Intuitively, we associate a decision problem with a set of its mutually exhaustive complete solutions that the agent considers implementing. 
\textit{Partial} solutions may be viewed as indeterministic choices between a set of compelete solutions: agents may intend only to partially solve their decision problem, given that sometimes the difference between some complete solutions is of no importance to them. For example, one may be dealing with the quest of getting to the railway station and consider three possible ways to do that: by bus, by tram or by foot. Both the bus and the tram depart from the same place and take roughly equal amount of time to get to the destination, so one may intend to simply go to the bus/tram station and take a bus or a tram without settling on either of the two complete solutions. 

A decision problem restricts what propositions the agent can intend.  Namely,  if one intends that $\varphi$, then $\varphi$ should be {\em defined on the agent's decision problem}. Beddor \& Goldstein \cite{beddor2023question} has two proposals to formalize the latter notion of definedness on a decision problem. To recap briefly, the first one takes it that $\varphi$ is defined on the agent's decision problem $\Pi$ iff $\varphi$ is a partial solution to $\Pi$ (in the sense described in Definition \ref{def:solutions_resolutions}). The corresponding logic of intention invalidates closure under entailment and validates agglomeration, but still forces closure under equivalents in full generality. The second proposal is a closer rival to our account presented in this paper, so we introduce its formal components in more detail and argue that it still overgenerates validities. Below, we assume that the object of intention is a sentence of a language of classical propositional logic, i.e., a formula that is built from atomic propositions using classical Boolean connectives ($\neg, \land, \lor$.  etc). The semantics for such expressions is the standard possible worlds semantics, where $\llbracket \varphi \rrbracket$ denotes the set of possible worlds that make $\varphi$ true.

The second proposal (see \cite[Section 6]{beddor2023question}) takes it that a proposition $\varphi$ is defined on a decision problem iff the \textit{subject matter} of $\varphi$ is included in the decision problem, where the notion of subject matter is also defined via partitions, in line with Lewisian theory of subject matters \cite{lewis1988relevant,lewis1988statements}.\footnote{A detailed presentation of theories of subject matter is beyond the scope of this work. We refer the reader to \cite{lewis1988statements,hawke2018theories,Plebani2021,lewis1988relevant}.}  Here, we briefly restate the relevant definitions of \cite{beddor2023question}.
\begin{definition}[Subject matters]
Given a logical space $W$ and any formula $\varphi$,  $sm(\varphi) \subseteq 2^{2^W}$ is the subject matter of $\varphi$, defined recursively as follows (where $p$ is atomic):
    \begin{center}
\begin{tabular}{lll} 
$sm(\top) = \{W\} $ & \,\, & $sm(\neg\varphi) = sm(\varphi)$\\
$sm(p) = \{ \llbracket p \rrbracket, W \setminus \llbracket p \rrbracket\} \setminus \{\emptyset\}$ &\,\, & $sm(\varphi \land \psi) = sm(\varphi \lor \psi) = \{ 
 X \cap Y \: |\: X \in sm(\varphi), Y \in sm(\psi)\}\setminus \{\emptyset\}$
\end{tabular}
\end{center}

Note that  for any formula $\varphi$, $sm(\varphi)$ is a partition on $W$\footnote{Our definition of $sm$ function differs from the one given in \cite{beddor2023question}. The only differences are that we define $sm(\top)$ as a primitive notion and make sure that for any $\varphi \in \cL$, $\emptyset\not\in sm(\varphi)$. It affects the framework neither conceptually nor technically; and is done for the sake of consistency with our notation and definitions.}; Moreover,  $\exists P \subseteq sm(\varphi): \bigcup P =  \llbracket \varphi \rrbracket$ and $ \exists N \subseteq sm(\varphi): \bigcup N = \llbracket \neg \varphi \rrbracket$. 
\label{def:subject_matters}
\end{definition}
\begin{definition}[Parthood on subject matter]
Given a logical space $W$ and two propositions $\varphi, \psi$, a subject matter of $\varphi$ is  a part of subject matter of $\psi$, $sm(\varphi) \sqsubseteq sm(\psi)$, iff for any $X \in sm(\varphi)$ there exists a subset $S \subseteq sm(\psi)$, such that $\bigcup S = X$.
\label{def:parthood_on_sm}
\end{definition}

We are now set up to define the formal framework. Given some logical space $W$, we can represent  intention via (1) the set of conative alternatives, i.e. the non-empty set of possible worlds $Con \subseteq W$, such that the agent's intention is satisfied in $Con$-worlds; (2) the decision problem  $\Pi$ , which is a partition of the logical space $W$.  


\begin{definition}[Question-sensitive intention]
Given a logical space $W$, a set of conative alternatives $Con$ and a decision problem $\Pi$, an agent intends that $\varphi$ iff (1) $Con \subseteq \llbracket \varphi \rrbracket$ and (2) $sm(\varphi) \sqsubseteq \Pi$, i.e. for any $X \in sm(\varphi)$ there exists  a partial solution $S \subseteq \Pi$, s.t. $X = \bigcup S$.
\label{def:definability_revised}
\end{definition}

The intuition behind Definition \ref{def:definability_revised} is the following: the agent intends that $\varphi$ iff (1) they intend to solve their decision problem in a way that will force $\varphi$ and (2) $\varphi$ is defined on the agent's decision problem. The agent has control over $\varphi$, since $\varphi$ing is considered a partial solution to their decision problem. By the very same reason, the agent is aware that they bring about that $\varphi$ by the solution they are committed to implement. Hence, both control constraint and epistemic flaws are accounted for in this semantics.

Definition \ref{def:definability_revised} validates consistency and agglomeration for  intention. The closure under a priori entailment (\ref{AE}) and equivalence (\ref{AEq}) fail, since the scope of the intention is restricted by the decision problem.  Nevertheless,  the usage of subject matters does not rule out all cases when the closure under  a priori or known equivalences may fail. The closure cannot be falsified for atomic propositions: it not hard to see that under Definition \ref{def:definability_revised}, for any atomic propositions $p, q$, if $\llbracket p \rrbracket = \llbracket q \rrbracket$, then $sm(p) = sm(q)$ and hence one intends that $p$ iff one intends that $q$. Examples \ref{exmpl:failure_of_cl_under_log_equiv} and \ref {exmpl:pacifist_taxpayer}  falsify this principle. 

Yet another, more conceptual, argument against formalizing decision problems as partitions comes from the way the notion of parthood on questions is defined. Intuitively, one decision problem is a part of another iff every complete solution to the former is a partial solution to the latter (a complete solution to a coarser problem is a partial solution to a finer problem). More formally, $\Pi_1$ is a part of $\Pi_2$, denoted by $\Pi_1 \sqsubseteq \Pi_2$,  iff for all $A\in \Pi_1$ there is $\{X_i\}_{i \in I} \subseteq \Pi_2$ such that $A=\bigcup_{i \in I} X_i$. We refer to $\sqsubseteq$ relation as \textit{extensional parthood}, i.e. $\Pi_1 \sqsubseteq \Pi_2$ reads as ``\textit{$\Pi_1$} is an extensional part of $\Pi_2$''. Another useful acronym: $\Pi_1 \simeq \Pi_2 : = \Pi_1 \sqsubseteq \Pi_2  \: \& \: \Pi_2 \sqsubseteq \Pi_1$, referred to as \textit{extensional equivalence} ($\Pi_1$ is extensionally equivalent to $\Pi_2$). The claim according to which there is nothing more to parthood than an extensional parthood, is going to be referred as \textit{extensional parthood claim}. Modeling decision problems as partitions supports the extensional parthood claim, affecting both proposals of question-sensitivity in \cite{beddor2023question}.  On the contrary, we reject the extensional parthood claim, since there are decision problems $\Pi_1, \Pi_2$, where $\Pi_1\sqsubseteq \Pi_2$, but the agent can intentionally solve $\Pi_1$ without any intention to solve $\Pi_2$. Consider the next example.
\begin{example}[Taking a train service without minding the time and place of departure]
A ticket office clerk works at a train station with two platforms. At every platform, a train departs once an  hour, at $j$:00. Every train has a unique code of the form $(i;j)$:  the train $(i;j)$ departs from the $i$th platform at $j$:00. Let $r^i_j$ be  the proposition ``\textit{someone has a ticket to the train with service number $(i;j)$}", $t_j$ -- ``\textit{someone has a ticket to a train that departs at $j$:00}", $p_i$ -- ``\textit{someone has a ticket to a train that departs from the $i$th platform}". Note that every atomic proposition $r^i_j$ is equivalent to the corresponding conjunction $p_i \land t_j$: every train service has its unique departure platform and time. 

Someone comes to the train station and asks the clerk to sell them a ticket for a train with number $(2;13)$: the customer only knows the train service number, since they were asked to buy that ticket with no further details provided. The clerk knows the system well and is aware that the train with number $(2;13)$ departs from platform 2 at 13:00. There is a pile of 48 tickets with numbers: $(1;1), (1;2), \ldots, (2;24)$. The clerk needs to choose which ticket they should take from the pile and intends that the customer gets the ticket for the train with service number $(2;13)$. The clerk intends neither that the customer gets a ticket for a train that departs from the second platform nor that the customer gets a ticket for a train that departs at 13:00, since that is simply not the focus of the clerk.
\label{exmpl:focused_clerk}
\end{example}

Intuitively, Example \ref{exmpl:focused_clerk} is just another instance of foreseeing without intending.
The clerk intends to give the customer the ticket the customer has asked for. The clerk also foresees that as a consequence of that, the customer will receive a ticket for a train that departs at 13:00 from platform number 2. Nevertheless, the clerk does not intend to give the customer a ticket for a train with a specific departure time or place, since it is just irrelevant to the decision problem the clerk is facing. Unfortunately, formalization via subject matters of solutions cannot capture this nuance. The clerk needs to choose the ticket from the pile, i.e., they consider all the tickets, and their decision problem is $\Pi = \{ \llbracket r^i_j \rrbracket \: | \: i\in \{1,2\}, 0 \leq j \leq 24\} \cup \{W \setminus \bigcup \limits^{i \in \{1,2\}}_{0 \leq j \leq 24} \llbracket r^i_j\rrbracket\}$. In other words, $\Pi = sm(\bigvee \limits^{i \in \{1,2\}}_{0 \leq j \leq 24} r^i_j)$.  Under Definition \ref{def:parthood_on_sm}, $sm(\bigvee \limits^{i \in \{1,2\}}_{0 \leq j \leq 24} r^i_j) = sm(\bigvee \limits^{i \in \{1,2\}}_{0 \leq j \leq 24}(p_i \land t_j))$. So that, $r^i_j$ is defined on $\Pi$ iff $p_i \land t_j$ is defined on  $\Pi$. Under Definition \ref{def:definability_revised}, the clerk cannot intend that $r^i_j$ without intending that $p_i$ and that $t_j$.

We may come up with a number of similar examples: one may look for a specific building without minding the street and the house number of that building or wonder about one's age  without thinking about neither the day nor the month nor the year one was born, etc.: in all such cases, an agent solves a problem without minding the extensional parts of the problem. 

Summarizing, modeling decision problems as partitions does not seem to be sufficiently fine-grained, since it enforces extensional parthood and, as a consequence, the closure under equivalence \ref{AEq} is satisfied in contexts where it should fail.

\section{A Hyperintensional Logic of Intention}\label{sec:logic}
As we have just seen, modeling decision problems as partitions overgenerates validities: it validates undesired, atomic instaces of $\ref{AEq}$. Luckily, there is a concurrent way to represent decision problems: as atomic objects. Namely, we can simply state that there is a set $P$ of problems that do not have internal structure. Then, we can define a solution function $s$, which maps every formula $\varphi$ to a set of problems: if $a \in s(\varphi)$, then bringing about that $\varphi$ is a partial solution to $a$. Definedness on a decision problem and control constraint are expressible via $s$. 
If the agent is solving a decision problem $a \in P$ and $a \in s(\varphi)$, then bringing about that $\varphi$ is a partial solution to $a$ the agent considers. Hence,  $\varphi$ is defined on $a$, which means that the agent has control over $\varphi$ and is aware that some solution to the decision problem forces $\varphi$.  To have an adequate notion of parthood on decision problems, we  define the corresponding partial order $\leq$ on $P$. Then, following our intuition about parthood,  function $s$ should be $\leq$-upward closed: if $a \leq b$, then for any $\varphi$,  $a \in s(\varphi)$ entails $b \in s(\varphi)$. Informally, if $a$ is a part of $b$, then any partial solution to $a$ is a partial solution to $b$. As we are to see, this approach steps on the hyperintensional terrain, where the closure under  logical and necessary equivalences fail.  Moreover, it will allow us to differentiate even between the necessarily equivalent atomic propositions and adequately model cases such as Examples \ref{exmpl:failure_of_cl_under_log_equiv}-\ref{exmpl:focused_clerk}. We also elaborate on how our proposed way of modeling decision problems relates to the partition-based perspective presented in Section \ref{sec:partitions}.

\subsection{Formalizing decision problems as atomic objects}

First, we define the formal language we are to work with. Let $\LangCPL$ be the language of classical propositional logic with a countable set of atomic propositions $\Prop$, defined recursively as follows: 

$$
\alpha : = \top \: | \: p \: | \: \neg \alpha \: | \: (\alpha \lor \alpha) \: | \: (\alpha \land \alpha) 
$$
where $p\in \Prop$. Note that we have  $\top$, $\neg$, $\lor$ and $\land$ as primitives and do not define one in terms of another. Nevertheless, we use the interdefinability of $\neg$, $\lor$ and $\land$ in terms of each other as our logic cannot discern, e.g. $\varphi \land \psi$ from $\neg(\neg \varphi \lor\neg \psi)$. However, we have specifically defined $\top$ as a primitive: it is pure tautology without non-logical context. The reader should not confuse $\top$ with another tautology expressed using the elements of $\Prop$, such as $p\lor \neg p$ and $p\rightarrow p$. They will be logically equivalent but they won't be substitutable with each other under the scope of our intention modality. For $\bot$, we set $\bot:=\neg \top$. Based on $\LangCPL$, the well-formed formulas of the language $\LangInt$ of our logic of intention are given by the following grammar: 
$$
 \alpha \: | \: \neg \varphi \: | \: (\varphi \lor \varphi) \: |\: (\varphi \land \varphi) \: | \: \Nec \varphi \: | \: \I \alpha
$$
where $\alpha\in\LangCPL$.  $\I \alpha$ reads as ``\textit{agent intends that $\alpha$}''.  Notice that the maximum modal depth of a formula w.r.t. $\I$ is one, that is, $I$ takes only the elements of $\LangCPL$ in its scope. It is done so to exclude expressions about higher order intentions, such as intention to intend. The reason for that restriction is to be discussed later. $\Nec \varphi$ read as ``\textit{it is necessarily true that $\varphi$}''. We employ the usual abbreviations for propositional connectives $\imp$ and $\biimp$ as  $\varphi\imp \psi:= \neg \varphi\vee \psi$,  and $\varphi\biimp \psi:= (\varphi\imp \psi) \wedge (\psi\imp \varphi)$.   We interpret this language on the so-called {\em problem-sensitive Kripke models}, obtained by endowing Kripke models with a join-semilattice that represents decision problems and their parthood relation. The former is familiar in modal logic and requires no elaboration \cite{blackburn2006handbook}. The component for the representation of decision problems, called {\em problems model}, is a variation of topic-sensitive models \cite{berto2022topics} and warrants further explanation. 

\begin{definition}[Problems model]
A problems model $\mathcal{P}$ is a tuple $(P, \oplus, s)$, where:
\begin{enumerate}
    \item $P \neq \emptyset$ is a non-empty set of decision problems;
    \item $\oplus:P \times P \rightarrow P$ is a binary idempotent, commutative, associative operation: problem fusion. We assume the unrestricted fusion, that is, $\oplus$ is always defined on $P$: $\forall A \subseteq P \exists a \in P: \bigoplus A = a$. The parthood relation $\leq$ is defined in terms of $\oplus$: $\forall a,b \in P: a \leq b $ iff $a \oplus b = b$;
    \item  $s : \Prop\rightarrow 2^P$ is a function assigning a set of decision problems to each element in $\Prop$. Namely, if $a \in s(p)$, then bringing it about that $p$ is a suitable (partial) solution to $a$. $s$ extends to the whole propositional  language $\LangCPL$ as follows:
    
    \begin{center}
\begin{tabular}{lll} 
$\s(\top) = P$ & \,\,\,\, & $\s(\varphi \lor \psi) =  \s(\varphi) \cap \s(\psi)$\\
$\s(\neg \varphi) = \s(\varphi)$ &\,\,\,\, & $\s(\varphi \land \psi) = \{ a \oplus b \:| \: a \in \s(\varphi), b \in \s(\psi) \}$
\end{tabular}
\end{center}

Given any $p \in Prop$, $s(p)$ is upward closed:  $a \leq b \Rightarrow (a \in \s(p) \Rightarrow b \in \s(p))$.
\end{enumerate}
\label{def:atomic_problems_revised}    
\end{definition}

The extension of $s$ to the whole propositional language is motivated as follows. Given any problem $a \in P$, forcing $\varphi$ is a partial solution to the problem iff to decide if $\varphi$ or not to $\varphi$ is a part of $a$, hence, $s(\varphi) = s(\neg \varphi)$. $\varphi \lor \psi$ partially solves $a$ iff both disjuncts -- to bring about that $\varphi$, to bring  about that $\psi$ -- solve $a$:  this corresponds to our intuition about partial solutions as indeterministic choice between a set of solutions. Finally, $\varphi \land \psi$ is a partial solution to $a$ iff $a$ can be split into two problems, i.e. $a = a_1 \oplus a_2$, such that $\varphi$ing partially solves $a_1$ and $\psi$ing partially solves $a_2$.  The following lemma shows that $s$ is upward closed over the whole language $\LangCPL$.

\begin{lemma}\label{lemma:MON}
Given a problems model $\cP=(P, \oplus, s)$, $a, b\in P$ and $\varphi\in \LangCPL$, the following holds: 
 \begin{equation}\label{eqn.mon.whole}
        a \leq b \Rightarrow (a \in \s(\varphi) \Rightarrow b \in \s(\varphi)). \notag
\end{equation}
\end{lemma}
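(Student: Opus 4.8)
The plan is to prove the statement by structural induction on $\varphi \in \LangCPL$, following the recursive definition of $s$ on $\LangCPL$ given in Definition~\ref{def:atomic_problems_revised}. The base cases are $\varphi = \top$ and $\varphi = p$ for $p \in \Prop$. The case $\varphi = \top$ is immediate since $s(\top) = P$, so $b \in s(\top)$ holds vacuously whenever $a \leq b$. The case $\varphi = p$ is exactly the upward-closure condition imposed on $s(p)$ in clause~3 of Definition~\ref{def:atomic_problems_revised}, so there is nothing to prove.

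For the inductive step, I would handle the three connectives in turn, assuming the claim holds for the immediate subformulas. The case $\varphi = \neg\psi$ is trivial: $s(\neg\psi) = s(\psi)$, so the induction hypothesis for $\psi$ gives exactly what is needed. For $\varphi = \psi \lor \chi$, suppose $a \leq b$ and $a \in s(\psi \lor \chi) = s(\psi) \cap s(\chi)$. Then $a \in s(\psi)$ and $a \in s(\chi)$; applying the induction hypotheses for $\psi$ and $\chi$ gives $b \in s(\psi)$ and $b \in s(\chi)$, hence $b \in s(\psi)\cap s(\chi) = s(\psi \lor \chi)$.

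The conjunction case $\varphi = \psi \land \chi$ is the one requiring a small lattice-theoretic argument, and I expect it to be the only non-routine step. Suppose $a \leq b$ and $a \in s(\psi \land \chi) = \{\, c \oplus d \mid c \in s(\psi),\ d \in s(\chi)\,\}$. Then $a = c \oplus d$ for some $c \in s(\psi)$, $d \in s(\chi)$. The idea is to write $b$ as a fusion of something above $c$ (still in $s(\psi)$ by the induction hypothesis) and something above $d$ (still in $s(\chi)$). Since $a \leq b$ means $a \oplus b = b$, we have $b = (c \oplus d) \oplus b = (c \oplus b) \oplus (d \oplus b)$ using idempotency, commutativity, and associativity of $\oplus$. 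Now $c \leq c \oplus b$ and $d \leq d \oplus b$ (each follows from $a\oplus (c\oplus b) = c\oplus b$-type computations, or directly: $c \oplus (c\oplus b) = c \oplus b$ by idempotency), so by the induction hypotheses $c \oplus b \in s(\psi)$ and $d \oplus b \in s(\chi)$. Therefore $b = (c\oplus b)\oplus(d\oplus b) \in s(\psi\land\chi)$, completing the induction. The main obstacle, such as it is, is simply bookkeeping the semilattice identities correctly to exhibit $b$ in the required decomposed form; everything else is immediate from the definitions.
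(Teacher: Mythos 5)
Your proof is correct and follows essentially the same structural induction as the paper's. The only difference is in the conjunction case, where the paper argues slightly more directly: from $a=c\oplus d$ it infers $c\leq a\leq b$ and $d\leq a\leq b$, applies the induction hypothesis to get $b\in s(\psi)$ and $b\in s(\chi)$, and then uses the single idempotent decomposition $b=b\oplus b$ instead of your $b=(c\oplus b)\oplus(d\oplus b)$; both decompositions are valid, yours just carries a little more semilattice bookkeeping.
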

\begin{proof}
The proof follows by induction on the structure of $\varphi$. The case for atomic propositions holds by the definition of $\s$. Case $\varphi:=\top$ is trivial as $\s(\top)=P$ and case $\varphi:=\neg\psi$ follows from the induction hypothesis (IH) and the fact that $\s(\neg\psi)=\s(\psi)$. 

\smallskip

\noindent Case $\varphi:=\psi\lor \chi$: Suppose that $a\leq b$ and $a\in \s(\psi\lor\chi)$. By the defn. of  $\s$, this means that $a\in \s(\psi)$ and $a\in \s(\chi)$. Then, by the IH, we have $b\in \s(\psi)$ and $b\in \s(\chi)$, thus, $b\in \s(\psi)\cap \s(\chi)=\s(\psi\lor \chi)$.

\smallskip

\noindent Case $\varphi:=\psi\land \chi$: Suppose that $a\leq b$ and $a\in \s(\psi\land\chi)$. By the defn. of  $\s$, this means that $a=d\oplus c$ for some $d\in \s(\psi)$ and $c\in \s(\chi)$. Observe that $a=d\oplus c$ implies that $d\leq a$ and $c\leq a$, and, in turn, $d\leq b$ and $c\leq b$. Then, by IH, we have $b\in \s(\psi)$ and $b\in \s(\chi)$. As $b=b\oplus b$, by defn. of $\s$, we have $b\in \s(\psi\land\chi)$.
\end{proof}


\subsection{Semantics, Axiomatization, and Completeness}

\begin{definition}[Problem-sensitive frames and models]

A problem-sensitive frame is a tuple $F = (W, R, \mathcal{P}, f)$, where:
\begin{itemize}
    \item $W \neq \emptyset$ is a non-empty set of possible worlds;
    \item $R \subseteq W \times W$ is a serial binary relation on $W$, such that $R(w)$ is the set of conative alternatives to $w$, i.e. possible worlds where agent's intention in $w$ is satisfied;
    \item $\mathcal{P} = (P, \oplus, s)$ is a problems model;
    \item $f: W \rightarrow P$ is a function, which takes a world and returns a decision problem that the agent is solving in the given world;
\end{itemize}

As usual, any frame $F$ may be extended to a model $\cM = (F, V)$, where $V: \Prop\rightarrow 2^W$ is a standardly defined valuation function. 
\end{definition}

We show the connection between the decision problems as atomic objects and as partitions. Namely,  every atomic decision problem $a \in P$, as it is given in Definition \ref{def:atomic_problems_revised}, can be transformed into a partition $\Pi_a$. Intuitively, $\Pi_a$ is  an \textit{extensional characterization} of $a$: if we accepted the extensional parthood claim, then $\Pi_a$ could be identified with $a$ and treated as in Definition \ref{def:definability_revised}.  Moreover,  we can show that if $a \leq b$, then $\Pi_a \sqsubseteq \Pi_b$. In other words, $\leq$ relation is not arbitrary, but meaningful w.r.t. solutions:  if $a \leq b$, then every partial solution to $a$ is a partial solution to $b$, hence, $a \sqsubseteq b$. Nevertheless, the converse does not hold: we may have decision problems $a,b \in P$, such that $\Pi_a \sqsubseteq \Pi_b$, but $a \not\leq b$. This corresponds to our denial of the extensional parthood claim: extensional parthood does not entail parthood in general. See Appendix \ref{part:s4.2} for the constructions and proofs of the mentioned properties. We interpret $\LangInt$ on  problem-sensitive models as in the following definition.

\begin{definition}[Semantic for $\LangInt$]

Given a problem-sensitive model $\cM=(W, R, \mathcal{P}, f, V)$ and a possible world $w\in W$, the satisfiability relation $\models$ is defined as follows, where $\br{\alpha}_{\cM}=\{w\in W \ | \ \cM, w\models \varphi\}$:
\[
\begin{array}{llll}
\cM, w \models p  & \mbox{ iff } & w \in V(p)\\
    \cM, w \models \neg \varphi & \mbox{ iff } & \cM, w \not\models \varphi\\
    \cM, w \models \varphi \lor \psi & \mbox{ iff } & \cM, w \models \varphi \mbox{ or } \cM, w \models \psi\\
   \cM, w \models \Nec \varphi & \mbox{ iff } & \forall v \in W: \cM, v \models \varphi\\
    \cM, w \models \I \alpha & \mbox{ iff } & R(w) \subseteq \br{\alpha}_{\cM} \mbox{ and } f(w) \in \s(\alpha).\\
\end{array} \]
\label{def:sat_for_problem_sensitive_models}
\end{definition}

\noindent The notions of logical consequence, validity, soundness, and completeness are defined standardly \cite{Blackburn_Rijke_Venema_2001}. While the semantic clauses for the Booleans and $\Nec$ are standard, the intention operator $\I\varphi$ is interpreted in a problem-sensitive way: the agent intends that $\alpha$ at $w$ iff (1) $\alpha$ is true in all conative alternatives at $w$ and (2) $\alpha$ is a partial solution to the agent's decision problem at $w$. (2) is what  invalidates \ref{CLE}-\ref{AEq}, making the resulting logic hyperintensional. 

\begin{theorem}
Principles \ref{CLE}-\ref{AEq} are invalid (also for atomic propositions). 
\end{theorem}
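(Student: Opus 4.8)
The plan is to refute the validity of all four schemes at once with a single small problem-sensitive model, exploiting the asymmetry in the clause for $\I\alpha$: the conative requirement $R(w)\subseteq\br{\alpha}$ is classical and hence blind to logical reformulation, whereas the solution requirement $f(w)\in\s(\alpha)$ is not, since by construction $\s(\top)\ne\s(p\vee\neg p)$, $\s(p)\ne\s(p\vee q)$ in general, and so on. So the strategy is to build a problems model in which two atoms $p,q$ get the same extension but different $\s$-values, and then place the agent at a world whose decision problem separates them.

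Concretely, I would take $\cP=(P,\oplus,\s)$ with $P=\{a,b\}$ and $a\oplus b=b$ (an idempotent, commutative, associative fusion admitting unrestricted joins, the bottom element $a$ serving as $\bigoplus\emptyset$, and inducing the parthood $a\le b$), two atoms $p,q$, and $\s(p)=P$, $\s(q)=\{b\}$ (both $\le$-upward closed, the latter because $b$ is the top). Then set $W=\{w,v\}$, $R$ any serial relation, $V(p)=V(q)=W$, and $f(w)=a$. Now $\br{p}=\br{q}=W$, so $\cM,w\models\Nec(p\to q)$ and $\cM,w\models\Nec(p\leftrightarrow q)$; and $\cM,w\models\I p$ (as $R(w)\subseteq W$ and $a\in\s(p)$) while $\cM,w\not\models\I q$ (as $a\notin\s(q)$). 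This already refutes \ref{AE} (with $\varphi:=p,\psi:=q$) and \ref{AEq} (same $\varphi,\psi$), with atomic formulas. For \ref{CLE} I would use the valid $p\to(p\vee q)$ together with $\s(p\vee q)=\s(p)\cap\s(q)=\{b\}$, so $f(w)=a$ still witnesses $\I p$ but not $\I(p\vee q)$; for \ref{CLEq} I would use the valid $p\leftrightarrow\bigl(p\wedge(q\vee\neg q)\bigr)$ together with $\s\bigl(p\wedge(q\vee\neg q)\bigr)=\{a\oplus b,\,b\oplus b\}=\{b\}$, again separating $\I p$ from $\I\bigl(p\wedge(q\vee\neg q)\bigr)$ — and in both of these the conative condition for the consequent is still met ($\br{p\vee q}=\br{p\wedge(q\vee\neg q)}=W\supseteq R(w)$), so the failure is due purely to problem-sensitivity. (For \ref{CLEq} one could alternatively use the valid $\top\leftrightarrow(p\vee\neg p)$ in a model with $\s(p)\subsetneq P$, since there $\s(\top)=P\ne\s(p)=\s(p\vee\neg p)$.)

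I do not expect a genuine obstacle; the work is bookkeeping. Two points need care: (i) checking that the chosen tuple really satisfies the prerequisites of Definition~\ref{def:atomic_problems_revised} — associativity, commutativity and idempotence are immediate on a two-element chain, but unrestricted fusion forces $\bigoplus\emptyset$ (the bottom) to exist, and $\s(p),\s(q)$ must be $\le$-upward closed, which is precisely why $\s(q)=\{b\}$ is admissible while $\s(q)=\{a\}$ would not be; and (ii) picking witnesses for \ref{CLE}/\ref{CLEq} that escape Lemma~\ref{lemma:MON} — e.g.\ $(p\wedge q)\to p$ is useless because $\s(p\wedge q)\subseteq\s(p)$ always, and $p\leftrightarrow(p\wedge\top)$ is useless because $\s(p\wedge\top)=\s(p)$, so the reformulation must be one where the $\s$-value strictly shrinks, as above. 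It is also worth noting in the proof that for \ref{CLE} and \ref{CLEq} the clause ``also for atomic propositions'' is vacuously true (distinct atoms are never mutually entailing or logically equivalent, so those rules hold trivially on all-atomic instances), so the substantive atomic-level content of the theorem — and the point of departure from the partition-based accounts of Section~\ref{sec:partitions} — is the failure of \ref{AE} and \ref{AEq} for atomic $\varphi,\psi$, which is exactly what $\cM$ delivers.
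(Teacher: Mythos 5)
Your proposal is correct and takes essentially the same approach as the paper: a small concrete problem-sensitive model in which two atoms $p,q$ receive the same extension but different solution sets, with $f(w)\in\s(p)\setminus\s(q)$, so that $\I p$ holds while $\I q$ (and the relevant compound reformulations of $p$) fail. The paper uses a three-element join-semilattice with $\s(p)=\{a,c\}$, $\s(q)=\{b,c\}$, $f(w)=a$ rather than your two-element chain, but the mechanism and the choice of witnesses for each of the four principles are the same.
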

\begin{proof}
We provide a counterexample against \ref{AEq}. The same counterexample can be used to prove the invalidity of the others. Consider the problem-sensitive model $\cM = (\{w\}, \{(w, w)\}, (\{a, b, c\}, \oplus, \s), f, V)$ such that $a\oplus b=c$, $a\not \leq b$ and $b\not \leq a$, $\s(p)=\{a, c\}$, $\s(q)=\{b, c\}$, $f(w)=a$,  and $V(p)=V(q)=\{w\}$. We then have that $\cM\models \Nec(p\leftrightarrow q) \wedge \I p$, but $\cM, w\not \models \I q$, since $f(w)=a\not \in \s(q)$.
\end{proof}


To state some of the axioms and other relevant principles in the complete system,  we use the abbreviation `$\overline{\varphi}$' to denote the tautology $\bigwedge_{p\in Var(\varphi)}(p\vee \neg p)$\footnote{In order to have a unique definition of each $\overline{\varphi}$, we set the convention that elements of $Var(\varphi)$ occur in $\bigwedge_{p\in Var(\varphi)}(p\vee \neg p)$ from left-to-right in the order they are enumerated in $\Prop=\{p_1, p_2, \dots\}$.  
},  following a similar idea in \cite{Giordani18}, where $Var(\varphi)$ is the set of atomic propositions occuring in $\varphi$.  It is easy to see that $\I\overline{\varphi}$ simply expresses that {\em $\varphi$ is a partial solution to the agent's decision problem} ($\cM, w\models \I\overline{\varphi}$ 
 iff $f(w)\in \s(\varphi)$). We list the axioms and  rules of the logic of problem-sensitive models in Table \ref{tbl:ax:plainbelief2}. 
\begin{table}[h!]
\centering
\begin{tabularx}{\textwidth}{lX lX}
\toprule
\textbf{Label} & \textbf{Axiom / Rule} & \textbf{Label} & \textbf{Axiom / Rule} \\
\midrule
(CPL) & All classical propositional taut. and MP 
      & ($\mathsf{S5}_\Nec$) & $\mathsf{S5}$ axioms and rules for $\Nec$ \\
($\mathsf{Ax1}$) & $\I \top$ 
      & ($\mathsf{Ax4}$) & $(\I \varphi \land \I \psi) \leftrightarrow \I (\varphi \land \psi)$  \\
 ($\mathsf{Ax2}$) & $\I \varphi \rightarrow \I{\overline{\varphi}}$ 
      & ($\mathsf{Ax5}$) & $\Nec(\psi \rightarrow \varphi) \rightarrow ((\I \psi \land \I \overline{\varphi}) \rightarrow \I \varphi)$   \\
 ($\mathsf{Ax3}$) & $\I \varphi \rightarrow \neg \I \neg \varphi$  
      & & \\
\bottomrule
\end{tabularx}
\caption{Axiomatization $\Log$ for the logic of problem-sensitive models.}\label{tbl:ax:plainbelief} \label{tbl:ax:plainbelief2}
\end{table}
We briefly comment on the axioms for the intention operator. $\mathsf{Ax1}$ means that one always has at least the weakest intention possible: to bring it about that $\top$. $\mathsf{Ax2}$ states that one intends that $\varphi$ only if $\varphi$ is a partial solution to one's decision problem, reflecting problem-sensitivity of intention. $\mathsf{Ax3}$ says that intentions are consistent and $\mathsf{Ax4}$ states that intentions agglomerate. Finally $\mathsf{Ax5}$ is our restricted closure principle: one intends those a priori consequences of one's intensions which also constitute partial solutions to one's decision problem. We conclude the section by stating our main technical result, a detailed proof of which can be found in Appendix \ref{app:1}.\footnote{Unsurprisingly, the axiomatization is similar to the axiomatization of the logic of simple hyperintensional belief in \cite{ozgun2021dynamic}. We do not have an intention counterpart of their axiom $B\varphi\rightarrow \Nec B\varphi$, since the semantics of $I$ depends on the actual world.}

\begin{theorem}\label{thm:comp}
 $\Log$ is a sound and strongly complete axiomatization of $\LangInt$ with respect to the class of all problem-sensitive models.
\end{theorem}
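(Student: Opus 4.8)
Soundness is the usual axiom-by-axiom check that each axiom is valid and each rule preserves validity. The Boolean axioms and MP are immediate, and $(\mathsf{S5}_\Nec)$ holds because $\Nec$ is interpreted by the universal relation on $W$. For the intention axioms, one preliminary fact does most of the work: for every $\varphi\in\LangCPL$, $\s(\varphi)=\s(\overline{\varphi})$ — a straightforward induction on $\varphi$ using Lemma~\ref{lemma:MON}, the point being that $\s(\overline{\varphi})$ is upward closed and depends only on $Var(\varphi)$, so monotonicity of $\s$ collapses the recursive clauses for $\vee$ and $\wedge$ onto it. Granting this, $\mathsf{Ax1}$ holds since $\s(\top)=P$ and $\br{\top}=W$; $\mathsf{Ax3}$ uses seriality of $R$; $\mathsf{Ax4}$ uses idempotence of $\oplus$ for the left-to-right direction and Lemma~\ref{lemma:MON} for the converse; and $\mathsf{Ax2}$, $\mathsf{Ax5}$ are immediate once one observes that the solution component of $\I\overline{\varphi}$ is exactly ``$f(w)\in \s(\overline{\varphi})=\s(\varphi)$''.

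\textbf{Canonical model.} For strong completeness, fix a consistent $\Gamma_0$, extend it (Lindenbaum) to a maximal consistent set still written $\Gamma_0$, and build $\CModel$ as follows. Let $W^c=[\Gamma_0]_\Nec$ be the set of maximal consistent sets with the same $\Nec$-theory as $\Gamma_0$, which by the $\mathsf{S5}$ axioms equals $\{\Delta : \{\chi : \Nec\chi\in\Gamma_0\}\subseteq\Delta\}$; put $\Delta R^c\Theta$ iff $\{\beta\in\LangCPL : \I\beta\in\Delta\}\subseteq\Theta$, and $V^c(p)=\{\Delta\in W^c : p\in\Delta\}$. For the problems model $\cD^c=(P^c,\oplus^c,\s^c)$ take $P^c=\mathcal{P}(\Prop)$ with $\oplus^c=\cup$ — a complete join-semilattice, hence idempotent, commutative, associative and closed under unrestricted fusion, with $a\leq b$ iff $a\subseteq b$ — and $\s^c(p)=\{a\subseteq\Prop : p\in a\}$, which is $\subseteq$-upward closed. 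A routine induction then yields $\s^c(\alpha)=\{a\subseteq\Prop : Var(\alpha)\subseteq a\}$ for all $\alpha\in\LangCPL$. Finally set $f^c(\Delta)=\{p\in\Prop : \I\overline{p}\in\Delta\}$. The payoff of these choices is the chain of equivalences $f^c(\Delta)\in \s^c(\alpha)\iff Var(\alpha)\subseteq f^c(\Delta)\iff \I\overline{p}\in\Delta$ for all $p\in Var(\alpha)\iff \I\overline{\alpha}\in\Delta$, the last step by $\mathsf{Ax4}$ (and $\mathsf{Ax1}$ when $Var(\alpha)=\emptyset$). One also checks $R^c$ is serial: the set $\{\beta\in\LangCPL : \I\beta\in\Delta\}\cup\{\chi : \Nec\chi\in\Delta\}$ is consistent, since otherwise $\mathsf{Ax4}$ together with $(\mathsf{S5}_\Nec)$ produces $\Nec(\beta\to\neg\top)\in\Delta$ for a finite conjunction $\beta$ of $\I$-witnesses, and then $\mathsf{Ax5}$ (with $\mathsf{Ax1}$, using $\overline{\neg\top}=\top$) gives $\I\neg\top\in\Delta$, contradicting $\mathsf{Ax3}$ and $\mathsf{Ax1}$; any maximal consistent extension of this set contains $\{\chi : \Nec\chi\in\Delta\}$, hence lies in $W^c$, and is an $R^c$-successor of $\Delta$.

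\textbf{Truth lemma and conclusion.} We prove by induction on $\varphi\in\LangInt$ that $\cM^c,\Delta\models\varphi$ iff $\varphi\in\Delta$. Atoms and Booleans are standard; the $\Nec$-case is the standard existence-lemma argument for the universal modality (using that $\{\chi : \Nec\chi\in\Delta\}$ pins down the whole $\Nec$-theory, by $\mathsf{S5}$). For $\I\alpha$ with $\alpha\in\LangCPL$, the induction hypothesis gives $\br{\alpha}_{\cM^c}=\{\Theta\in W^c : \alpha\in\Theta\}$, so $\cM^c,\Delta\models\I\alpha$ iff ($\alpha\in\Theta$ whenever $\Delta R^c\Theta$) and $f^c(\Delta)\in \s^c(\alpha)$. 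If $\I\alpha\in\Delta$, the first conjunct holds by the definition of $R^c$ and the second by $\mathsf{Ax2}$ and the equivalence above. Conversely, if both conjuncts hold, then $f^c(\Delta)\in \s^c(\alpha)$ yields $\I\overline{\alpha}\in\Delta$, while the first conjunct forces $\{\beta\in\LangCPL : \I\beta\in\Delta\}\cup\{\chi : \Nec\chi\in\Delta\}\vdash\alpha$ (otherwise this set together with $\neg\alpha$ extends to an $R^c$-successor of $\Delta$ inside $W^c$ refuting $\alpha$); taking $\beta$ to be the conjunction of the finitely many needed $\I$-witnesses and $\chi$ that of the $\Nec$-witnesses, $\mathsf{Ax4}$ gives $\I\beta\in\Delta$ and $(\mathsf{S5}_\Nec)$ gives $\Nec(\beta\to\alpha)\in\Delta$, whence $\mathsf{Ax5}$ delivers $\I\alpha\in\Delta$. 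Since $\Gamma_0\in W^c$, the truth lemma gives $\cM^c,\Gamma_0\models\Gamma_0$, so every consistent set is satisfiable, establishing strong completeness.

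\textbf{Main obstacle.} The delicate step is the converse direction of the $\I\alpha$-clause of the truth lemma, where one must extract a \emph{syntactic} derivation of $\alpha$ from the agent's intentions (the relational component) and merge it with definedness on the decision problem (the solution component) through the single restricted-closure axiom $\mathsf{Ax5}$ — and do this over $\LangCPL$ while keeping the witnessing $R^c$-successor inside $[\Gamma_0]_\Nec$, which is exactly why the $\{\chi : \Nec\chi\in\Delta\}$ premises and an internalization of $\beta\to\alpha$ under $\Nec$ enter. Designing the problems component of the canonical model — $P^c$, $\oplus^c$, $\s^c$, $f^c$ — so that ``$f^c(\Delta)\in \s^c(\alpha)$'' coincides with ``$\I\overline{\alpha}\in\Delta$'' is what makes this go through, and it is precisely here that the hyperintensionality of the logic ($\s^c$ tracking $Var(\alpha)$ rather than $\br{\alpha}$) is put to use.
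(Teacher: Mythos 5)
Your proof is correct and follows essentially the same route as the paper's: the same canonical problems component ($P^c=\mathcal{P}(\Prop)$ with $\oplus^c=\cup$, $\s^c$ tracking $Var(\alpha)$, $f^c$ defined via $\I\overline{p}$), the same key lemmas ($\s^c(\alpha)=\{a : Var(\alpha)\subseteq a\}$, the equivalence $f^c(\Delta)\in\s^c(\alpha)\iff\I\overline{\alpha}\in\Delta$), and the same use of $\mathsf{Ax5}$ together with an $\mathsf{Ax4}$-conjunction of witnesses in the existence step. Your canonical relation ($\{\beta\in\LangCPL : \I\beta\in\Delta\}\subseteq\Theta$ restricted to $[\Gamma_0]_\Nec$) coincides with the paper's $\Delta[\I]\subseteq\Theta$, since the paper's $\Nec$-closure of the intended formulas is absorbed by requiring $\Theta$ to share $\Delta$'s $\Nec$-theory, so the difference is purely presentational.
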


\section{Conclusions and Future Work}

In this paper, we developed a hyperintensional logic of intention, addressing the problem of side-effects by invalidating a number of closure principles. By integrating elements from inquisitive semantics and topic-sensitive epistemic frameworks, we provided a formal system that better captures the rational commitments of intentional agents. Our approach refines question-sensitive theory of intention by capturing hyperintensional differences between decision problems. Technically, we introduced a bi-modal logic with a sound and strongly complete axiomatization.

Several directions remain open for future work. First, incorporating belief alongside intention would allow for a more nuanced interaction between epistemic states and decision-making. Second, formalizing a notion of control would further clarify what propositions agents in general can intend. Third, allowing nested $\I$ operators would raise interesting questions, such as which decision problems could be solved by adopting  intention to $\varphi$, and how they are connected with problems solvable by $\varphi$ itself. Moreover, extending our system to dynamic settings could provide insights into intention revision and deliberative reasoning over time. Finally, a multi-agent extension of the proposed framework would allow us to model interesting interactions among intending agents, e.g.,  lack of control that results from the actions of others.

\section*{Acknowledgments}
Many thanks to our anonymous referees for their constructive feedback, which helped us significantly improve the final version of this extended abstract. Special thanks to Franz Berto for allowing us to test some of our examples on him.
\bibliographystyle{eptcs}
\bibliography{lit}
\appendix
\section{Proofs from Section 4.2}\label{part:s4.2}

\begin{lemma}
    Given a problems model $\cP=(P, \oplus, s)$, a decision problem $a\in P$, and formulas $\varphi, \psi\in \LangCPL$, $a \in s(\varphi \lor \psi) \mbox{ iff } a \in s (\varphi \land \psi)$.
\label{prop:ext_part_conj_disj}
\end{lemma}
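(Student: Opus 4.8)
The plan is to prove both directions by unwinding the definition of the extended solution function $\s$ on $\LangCPL$, using idempotence of $\oplus$ and the monotonicity result from Lemma \ref{lemma:MON}.

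For the left-to-right direction, I would assume $a \in \s(\varphi \lor \psi)$. By the clause $\s(\varphi \lor \psi) = \s(\varphi) \cap \s(\psi)$, this gives $a \in \s(\varphi)$ and $a \in \s(\psi)$. Since $\oplus$ is idempotent, $a = a \oplus a$, and thus $a = a \oplus a$ witnesses membership in $\s(\varphi \land \psi) = \{\, b \oplus c \mid b \in \s(\varphi),\, c \in \s(\psi)\,\}$ with $b = c = a$. Hence $a \in \s(\varphi \land \psi)$.

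For the right-to-left direction, I would assume $a \in \s(\varphi \land \psi)$. By definition there exist $b \in \s(\varphi)$ and $c \in \s(\psi)$ with $a = b \oplus c$. From $a = b \oplus c$ it follows that $b \leq a$ and $c \leq a$ (since $b \oplus a = b \oplus b \oplus c = b \oplus c = a$, and symmetrically for $c$, using idempotence, commutativity and associativity). Applying Lemma \ref{lemma:MON} (monotonicity of $\s$ over all of $\LangCPL$) to $b \leq a$ with $b \in \s(\varphi)$ yields $a \in \s(\varphi)$, and similarly $a \in \s(\psi)$. Therefore $a \in \s(\varphi) \cap \s(\psi) = \s(\varphi \lor \psi)$.

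There is no real obstacle here: the statement is essentially a bookkeeping consequence of the definitions together with the already-established Lemma \ref{lemma:MON}. The only point requiring a moment's care is the derivation of $b \leq a$ and $c \leq a$ from $a = b \oplus c$, which is the same observation already used inside the proof of Lemma \ref{lemma:MON}; I would state it explicitly rather than leave it implicit.
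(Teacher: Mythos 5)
Your proof is correct and follows essentially the same route as the paper's: the left-to-right direction via idempotence of $\oplus$, and the right-to-left direction by extracting $b \leq a$ and $c \leq a$ from $a = b \oplus c$ and applying Lemma \ref{lemma:MON}. The only difference is that you spell out the algebraic verification of $b \leq a$, which the paper leaves implicit.
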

\begin{proof}

\noindent ($\Rightarrow$) Assume $a \in s(\varphi \lor \psi)$. By Defn. \ref{def:atomic_problems_revised}, this means that $a \in s(\varphi) \cap s(\psi)$, i.e. $a \in s(\varphi)$ and $a\in s(\psi)$. Then, since $a=a\oplus a$, we obtain that $a \in s(\varphi \land \psi)$.

($\Leftarrow$) Assume $a \in s(\varphi \land \psi)$. By Defn. \ref{def:atomic_problems_revised}, this means that there is $b, c \in P$ such that $a = b \oplus c, b \in s(\varphi), c \in s(\psi)$. The fact that $a = b \oplus c$ implies that $b \leq a$ and $c \leq a$. Then, by Lemma \ref{lemma:MON}, we obtain that $a \in s(\varphi)$ and $a \in s(\psi)$, that is, $a \in s(\varphi) \cap t(\psi)$. By Defn. \ref{def:atomic_problems_revised}, we conclude $a \in s(\varphi \lor \psi)$.
\end{proof}

\begin{proposition}[Every decision problem yields a partition]\label{prop:app:partition}
Given any problem-sensitive model $\cM = (W, R, \mathcal{P}, f, V)$ any decision problem $a \in P$, and $\varphi\in \cL_{CPL}$, there exists a partition $\Pi_a$ of $W$ such that if $a \in s(\varphi)$, then $\varphi$ is defined on $\Pi_a$, i.e.  $sm(\varphi) \sqsubseteq \Pi_a$.
\label{prop:atomic_problems_yield_partitions}
\end{proposition}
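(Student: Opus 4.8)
The plan is to build $\Pi_a$ directly out of the set of $\LangCPL$-formulas that $a$ solves, using the Boolean structure of $\LangInt$-formulas to cut $W$ into cells. First I would collect the relevant data: let $S_a := \{\alpha \in \LangCPL : a \in s(\alpha)\}$. By Lemma \ref{lemma:MON} this set is closed upward under $\leq$ in the obvious sense, and by Lemma \ref{prop:ext_part_conj_disj} it is closed under swapping $\land$ for $\lor$; also $s(\neg\alpha) = s(\alpha)$ means $S_a$ is closed under negation, and $\top \in S_a$ always. The natural candidate is to take the coarsest partition refining all the bipartitions $\{\br{\alpha}, W\setminus\br{\alpha}\}$ for $\alpha \in S_a$; concretely, define an equivalence relation $w \sim_a v$ iff for every $\alpha \in S_a$, $w \in \br{\alpha} \iff v \in \br{\alpha}$, and let $\Pi_a$ be the set of its (nonempty) equivalence classes. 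This is visibly a partition of $W$.

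Next I would verify the required property: if $a \in s(\varphi)$ then $sm(\varphi) \sqsubseteq \Pi_a$. Unpacking Definition \ref{def:parthood_on_sm}, I need to show every cell $X \in sm(\varphi)$ is a union of $\sim_a$-classes, i.e. that $X$ is $\sim_a$-saturated: if $w \in X$ and $w \sim_a v$ then $v \in X$. The key structural fact to establish first is that for atomic $p \in Var(\varphi)$ with $a \in s(\varphi)$, we have $p \in S_a$ — this should follow because $a \in s(\varphi)$ forces, by the clauses for $s$ together with Lemma \ref{lemma:MON}, that $a \in s(\overline{\varphi}) = s\big(\bigwedge_{p\in Var(\varphi)}(p \vee \neg p)\big)$, and then splitting the fusion witnessing membership in $s$ of a conjunction (again via Lemma \ref{lemma:MON}, as in the proof of Lemma \ref{prop:ext_part_conj_disj}) yields $a \in s(p \vee \neg p) = s(p)$ for each such $p$. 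Granting that, each $\br{p}$ with $p \in Var(\varphi)$ is $\sim_a$-saturated by construction of $\sim_a$; and since every $X \in sm(\varphi)$ is a Boolean combination (intersection of the $\br{p}$'s and their complements, by the recursive definition of $sm$ in Definition \ref{def:subject_matters}), $X$ is a Boolean combination of $\sim_a$-saturated sets, hence itself $\sim_a$-saturated. Therefore $X$ is a union of cells of $\Pi_a$, which is exactly $sm(\varphi) \sqsubseteq \Pi_a$.

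I expect the main obstacle to be the bookkeeping in the claim that $a \in s(\varphi)$ implies $a \in s(p)$ for each $p \in Var(\varphi)$: one has to push the fusion-decomposition argument through an arbitrary nesting of $\land$ and $\lor$ inside $\varphi$, and the $\lor$ case (where $s(\varphi\lor\psi) = s(\varphi)\cap s(\psi)$) and the $\land$ case interact differently with $Var(\varphi)$. The clean way is an induction on $\varphi \in \LangCPL$ proving the auxiliary lemma ``$a \in s(\varphi) \Rightarrow a \in s(p)$ for all $p \in Var(\varphi)$'' outright, using Lemma \ref{lemma:MON} in the $\land$-step exactly as in Lemma \ref{prop:ext_part_conj_disj}; the $\neg$ and $\top$ steps are immediate. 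A minor secondary point is checking $\Pi_a$ has no empty cell, which is automatic since classes of an equivalence relation on $W$ are nonempty, and that $\bigcup \Pi_a = W$, likewise automatic.
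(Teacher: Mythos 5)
Your construction of $\Pi_a$ is, in substance, the same partition the paper builds: the paper takes the cells to be the truth-sets of maximally satisfiable subsets of $s^{-1}(a)=\{\alpha \in \cL_{CPL} : a\in s(\alpha)\}$, and since $s^{-1}(a)$ is closed under negation these coincide exactly with your $\sim_a$-equivalence classes. Where you genuinely diverge is in the verification that $a\in s(\varphi)$ implies $sm(\varphi)\sqsubseteq\Pi_a$. The paper proves this by a direct induction on $\varphi$, manipulating unions of cells explicitly in the $\lor$ and $\land$ cases (including a small disjointness argument relegated to a footnote). You instead isolate the auxiliary lemma that $a\in s(\varphi)$ forces $a\in s(p)$ for every $p\in Var(\varphi)$ --- which is correct and goes through by the induction you sketch, with the $\land$ step handled via Lemma~\ref{lemma:MON} exactly as in Lemma~\ref{prop:ext_part_conj_disj} --- and then observe that every cell of $sm(\varphi)$ is an intersection of sets of the form $\br{p}$ and $W\setminus\br{p}$ with $p\in Var(\varphi)$, each of which is $\sim_a$-saturated, so every cell is saturated and hence a union of cells of $\Pi_a$. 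This reorganization buys a cleaner argument: the saturation step is a one-line closure property (saturated sets are closed under complement and intersection), and the only substantive induction is the atomic-decomposition lemma, which is also the semantic counterpart of Lemma~\ref{lem:var} used later in the completeness proof. The paper's route avoids appealing to the normal form of $sm$-cells but pays with heavier bookkeeping. The one point you should make explicit when writing this up is that the claim that every cell of $sm(\varphi)$ is such an intersection of literal-extensions over $Var(\varphi)$ is itself a (short) induction on Definition~\ref{def:subject_matters}; with that stated, your proof is complete and correct.
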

\begin{proof}
    Let $s^{-1}(a) = \{ \varphi \in \mathcal{L}_{CPL} \: |\: a \in s(\varphi)\}$ that is ,  $\varphi \in s^{-1}(a)$ iff bringing it about that $\varphi$ partially solves $a$. Let $S^{-1}(a)$ be a collection of maximally satisfiable subsets of $s^{-1}(a)$, i.e. $\Gamma \in S^{-1}(a)$ iff 1) $\Gamma \subseteq s^{-1}(a)$; 2) there exists a world $w \in W$, such that $\cM, w \models \varphi$ for all $\varphi \in \Gamma$ (from now on, abbreviated as $\cM, w \models \Gamma$) and 3) for any $\Delta \subseteq s^{-1}(a)$, such that $ \Gamma \subset \Delta$, there is no world $w \in W$, such that $\cM, w \models \Delta$. 

    For every $\Gamma \in S^{-1}(a)$ we define a set $\llbracket \Gamma \rrbracket = \{w \in W \: | \: \cM, w \models \varphi \mbox{ for all } \varphi\in \Gamma\}$. Then, clearly the set $\Pi_a= \{ \llbracket \Gamma \rrbracket \: |\: \Gamma \in S^{-1}(a)\}$ is a partition of $W$. We show that for any $\varphi \in \mathcal{L}_{CPL}$, if $a \in s(\varphi)$ then $\varphi$ is defined on $\Pi_a$. 

    By induction on $\varphi$. Let $\varphi := p$.  Assume $a \in s(p)$ for some proposition $p \in \Prop$. W.l.o.g., let $\llbracket p \rrbracket \neq W$.   By Definition \ref{def:subject_matters}, $sm(p) = \{ \llbracket p \rrbracket, W \setminus \llbracket p \rrbracket\}$.  Let $\widehat{p} = \{\bigcup \llbracket \Gamma \rrbracket \: |\: \Gamma \in S^{-1}(a), p \in \Gamma\}$. Clearly, $\widehat{p} = \llbracket p \rrbracket$: since $p \in s^{-1}(a)$, for every $\Gamma \in S^{-1}(a)$ either $p \in \Gamma$ or $\neg p \in \Gamma$. Hence, for any possible world $w \in W$, if $\cM, w \models p$, then there exists $\Gamma \in S^{-1}(a)$, such that $\cM, w \models \Gamma$, from which it follows that $w \in \widehat{p}$. On the other hand, if $w \in \widehat{p}$, then there exists a maximally satisfiable set $\Gamma \in S^{-1}(a)$, such that $p \in \Gamma$. Since $\cM, w \models \Gamma$, $\cM, w \models p$, from which, by definition, it follows that $w \in \llbracket p \rrbracket$.  Then, $\llbracket p  \rrbracket = \bigcup_{i \in I} X_i$ for some $\{X_i \}_{i \in I} \subseteq \Pi_a$, and $W \setminus \llbracket p \rrbracket = \bigcup_{j \in J} X_j$, where $ \{X_j\}_{j \in J} = \Pi_a \setminus \{X_i\}_{i \in I}$ i.e. $sm(p) \sqsubseteq \Pi_a$, i.e. $p$ is defined on $\Pi_a$.

    Let $\varphi : = \neg \psi$. By IH, if $a \in sm(\psi)$, then $\psi$ is defined on $\Pi_a$. Since $sm(\psi) = sm(\neg \psi) = sm(\varphi)$, if $a \in sm(\varphi)$, then $sm(\varphi) \sqsubseteq  \Pi_a$.

    Let $\varphi : = \psi_1 \lor \psi_2$ and $a \in s(\psi_1 \lor \psi_2)$. Since $s(\psi_1 \lor \psi_2) = s(\psi_1) \cap s(\psi_2)$, $a \in s(\psi_1)$ and $a \in s(\psi_2)$. By IH, if $a \in s(\psi_1), s(\psi_2)$, then $\psi_1$ and $\psi_2$ are defined on $\Pi_a$, i.e. $sm(\psi_1) \sqsubseteq \Pi_a$ and $sm(\psi_2) \sqsubseteq \Pi_a$. By Definition \ref{def:parthood_on_sm}, it means that  $\forall A \in sm(\psi_1) \exists \{ X_i\}_{i \in I} \subseteq \Pi_a: \bigcup_{i \in I} X_i = A$  and $\forall B \in sm(\psi_1) \exists \{ X_j\}_{j \in J} \subseteq \Pi_a: \bigcup_{i \in J} X_j = B$. By Definition \ref{def:subject_matters}, $sm(\psi_1 \lor \psi_2) = \{A \cap B \: |\: A \in sm(\psi_1), B \in sm(\psi_2)\} \setminus \{\emptyset\}$.  Take any $A \cap B \in sm(\psi_1 \lor \psi_2)$. As we have shown before, $A = \bigcup_{i \in I} X_i$ and $B = \bigcup_{j \in J} X_j$ for some $\{X_k\}_{k \in I \cup J} \subseteq \Pi_a$. Hence, given that all $X_i$'s and $X_j$'s are disjoint\footnote{For the contradiction, assume they are not disjoint. Then, there exists $X_i = \llbracket \Gamma_i \rrbracket$, $X_j = \llbracket \Gamma_j \rrbracket$, such that $X_i \cap X_i \neq \emptyset$. Let $w \in X_i \cap X_j$. Then, $\cM, w \models \Gamma_i \cup \Gamma_j$, which contradicts the fact that $\Gamma_i$ and $\Gamma_j$ are \textit{maximally satisfiable} subsets of $s^{-1}(a)$.}, $A \cap B = \bigcup_{l \in I \cap J} X_l$. Since $A \cap B$ was an arbitrary element of $sm(\psi_1 \lor \psi_2)$, we can conclude that $\forall O \in sm(\psi_1 \lor \psi_2) \exists \{X_l\}_{l \in L} \subseteq \Pi_a: \bigcup_{l \in L} X_l = O$, so that, $sm(\psi_1 \lor \psi_2) \sqsubseteq \Pi_a$, which means that $\psi_1 \lor \psi_2$ is define don $\Pi_a$.

    Let $\varphi := \psi_1 \land \psi_2$. Analogously to the previous case (cases are analogous in light of Lemma \ref{prop:ext_part_conj_disj}).
\end{proof}
\begin{proposition}
    Given any problem-sensitive model $\cM = (W, R, \mathcal{P}, f, V)$ and any two problems $a, b\in P$, if $a \leq b$, then $\Pi_a \sqsubseteq \Pi_b$, where $\Pi_a$ and $\Pi_b$ are defined as in Proposition \ref{prop:app:partition}.
\label{prop:ext_parthood_is_preserved}
\end{proposition}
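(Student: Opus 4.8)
The plan is to reduce the statement to the elementary fact that enlarging the set of formulas used to build an induced partition can only refine it. The single substantive ingredient is monotonicity: by Lemma~\ref{lemma:MON}, $a\leq b$ implies that $s^{-1}(a)\subseteq s^{-1}(b)$, where $s^{-1}(x)=\{\varphi\in\cL_{CPL}\mid x\in s(\varphi)\}$ as in Proposition~\ref{prop:app:partition}. Informally, every formula whose truth partially solves $a$ also partially solves the coarser problem $b$, so $\Pi_b$ is assembled from ``more'' formulas than $\Pi_a$ and is therefore at least as fine.

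Unfolding the definition of $\sqsubseteq$, I must show that every cell of $\Pi_a$ is a union of cells of $\Pi_b$. So I would fix a cell $X\in\Pi_a$, write it as $X=\llbracket\Gamma\rrbracket$ for some $\Gamma\in S^{-1}(a)$, and argue that $X=\bigcup\{Y\in\Pi_b\mid Y\subseteq X\}$. The inclusion $\supseteq$ is immediate, so the work is to take an arbitrary $w\in X$ and show that the (unique) cell $Y_w\in\Pi_b$ containing $w$ satisfies $Y_w\subseteq X$. Since $\Pi_b$ partitions $W$ and $w\in W$, such a cell exists and can be written $Y_w=\llbracket\Delta\rrbracket$ with $\Delta\in S^{-1}(b)$ and $\cM,w\models\Delta$.

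The technical heart is the observation that $\Gamma\subseteq\Delta$, which I would obtain from the maximality of $\Delta$ in $s^{-1}(b)$: because $\Gamma\subseteq s^{-1}(a)\subseteq s^{-1}(b)$ and $\cM,w\models\Gamma$ (as $w\in X=\llbracket\Gamma\rrbracket$), the set $\Gamma\cup\Delta$ is a satisfiable subset of $s^{-1}(b)$ (witnessed by $w$) containing $\Delta$; maximality of $\Delta$ forces $\Gamma\cup\Delta=\Delta$, i.e.\ $\Gamma\subseteq\Delta$. Then for any $v\in Y_w=\llbracket\Delta\rrbracket$ we have $\cM,v\models\Delta$, hence $\cM,v\models\Gamma$, hence $v\in\llbracket\Gamma\rrbracket=X$; so $Y_w\subseteq X$, and since $w\in X$ was arbitrary, $X$ is the union of those cells of $\Pi_b$ it contains. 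As $X\in\Pi_a$ was arbitrary, $\Pi_a\sqsubseteq\Pi_b$.

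I do not expect a serious obstacle: once $s^{-1}(a)\subseteq s^{-1}(b)$ is in hand via Lemma~\ref{lemma:MON}, the remainder is routine. The only point requiring mild care is the handling of the maximally satisfiable sets — one should absorb $\Gamma$ into the cell-defining set $\Delta$ rather than try to compare $\Gamma$ with elements of $S^{-1}(b)$ directly. If one prefers, the same argument can be packaged by first noting that each cell of $\Pi_x$ is exactly an equivalence class of the relation ``agrees with the given world on every formula in $s^{-1}(x)$'', after which the claim is just the monotonicity of that relation in $x$.
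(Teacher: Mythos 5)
Your proof is correct and follows essentially the same route as the paper's: both reduce the claim to $s^{-1}(a)\subseteq s^{-1}(b)$ via Lemma~\ref{lemma:MON} and then use maximality of the cell-defining sets to show each cell of $\Pi_a$ is a union of cells of $\Pi_b$. The only (minor) difference is that you argue world-by-world and absorb $\Gamma$ into the $\Pi_b$-cell of each $w\in X$, which lets you lean on the fact that $\Pi_b$ is already a partition, whereas the paper constructs the family $[\Gamma]_b$ of maximal extensions of $\Gamma$ directly and needs a Lindenbaum-style remark to cover the right-to-left inclusion.
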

\begin{proof}
Let $a \leq b$ for some arbitrary decision problems $a, b \in P$. Then, by Lemma \ref{lemma:MON}, if $a \in s(\varphi)$, then $b \in s(\varphi)$ for any $\varphi \in \LangCPL$. In other words, $s^{-1}(a) \subseteq s^{-1}(b)$. 

Let $X \in \Pi_a$ be an arbitrary complete solution to $\Pi_a$. By the definition of $\Pi_a$, $X = \llbracket\Gamma\rrbracket$, where $\Gamma$ is a maximally satisfiable subset of $s^{-1}(a)$. Since $s^{-1}(a) \subseteq s^{-1}(b)$, there exists some maximally satisfiable subsets $\Sigma \subseteq s^{-1}(b)$ such that $\Gamma \subseteq \Sigma$. Let $[\Gamma]_b = \{ \Sigma \in S^{-1}(b) \: |\: \Gamma \subseteq \Sigma  \}$ be a collection of such maximally satisfiable subsets. As we have just shown, $[\Gamma]_b \neq \emptyset$. Moreover, as we are to show,  $\bigcup_{\Sigma \in [\Gamma]_b} \llbracket \Sigma \rrbracket = \llbracket \Gamma \rrbracket$. From left-to-right: assume $w \in \bigcup_{\Sigma \in [\Gamma]_b} \llbracket \Sigma \rrbracket$. Then, $w \in \llbracket \Sigma_i \rrbracket$ for some $\Sigma_i \in [\Gamma_b]$. Then, $\cM, w \models \Sigma_i$. Since $ \Gamma \subseteq \Sigma_i$, $\cM, w \models \Gamma$, i.e. $w \in \llbracket \Gamma \rrbracket$. Right-to-left: assume $w \in \llbracket \Gamma \rrbracket$. Then, there exists some $\Sigma_i \in [\Gamma_b]$, such that $\cM, w \models \Sigma_i$.\footnote{In order to see why it holds, note that for any problem $x \in P$, $s^{-1}(x)$ is closed under negation, i.e., for any formula $\varphi \in \LangCPL$,  $\varphi \in s^{-1}(x)$ iff $\neg \varphi \in s^{-1}(x)$ (since $s(\varphi) = s(\neg \varphi)$). Given that and the fact that $s^{-1}(a) \subseteq s^{-1}(b)$, we can complete any maximally satisfiable set $\Gamma \in S^{-1}(a)$ to some maximally satisfiable set $\Sigma \in S^{-1}(b)$ by an analogue of Lindenbaum construction.} Then, $w \in \llbracket \Sigma_i \rrbracket$ and since $\llbracket \Sigma_i \rrbracket \subseteq \bigcup_{\Sigma \in [\Gamma]_b} \llbracket \Sigma \rrbracket$, $w \in \bigcup_{\Sigma \in [\Gamma]_b} \llbracket \Sigma \rrbracket$.

Moreover, by definition of $\Pi_b$, $\{ \llbracket \Sigma \rrbracket \: |\: \Sigma \in [\Gamma]_b \} \subseteq \Pi_b$. Hence, $X$, a complete solution to $\Pi_a$, is a partial solution to $\Pi_b$. Since $X$ was arbitrary, we can conclude that any complete solution to $\Pi_a$ is a partial solution to $\Pi_b$, from which, by Definition \ref{def:parthood_on_sm}, it follows that $\Pi_a \sqsubseteq \Pi_b$. 
\end{proof}

\begin{proposition}[$\Pi_a \sqsubseteq \Pi_b$ does not entail $a \leq b$]

Given any problem-sensitive model $\cM = (W, R, \mathcal{P}, f, V)$ and two problems $a, b\in P$, the fact that $\Pi_a \sqsubseteq \Pi_b$ does not entail $a \leq b$.
\label{prop:ext_parthood_does_not_entail_parthood}
\end{proposition}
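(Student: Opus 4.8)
The plan is to give a single counterexample: a problem-sensitive model containing two decision problems $a, b$ with $\Pi_a \sqsubseteq \Pi_b$ but $a \not\leq b$. The driving observation is that the partition $\Pi_x$ attached to a problem $x$ in Proposition~\ref{prop:app:partition} is a partition \emph{of $W$}, so it is entirely blind to the internal order structure of $P$; in particular, when $W$ is a singleton every problem is sent to one and the same partition, while $\leq$ can still be nontrivial. So the non-entailment is already witnessed in a degenerate model.

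Concretely, I would reuse the model witnessing the invalidity of Principles \ref{CLE}--\ref{AEq}: take $W = \{w\}$, $R = \{(w,w)\}$, $P = \{a, b, c\}$ with $\oplus$ the unique idempotent, commutative, associative operation for which $a \oplus b = c$ and $a \not\leq b$, $b \not\leq a$ (so $c$ is the join of the incomparable elements $a$ and $b$), $s(p) = P$ for every $p \in \Prop$, $f(w) = a$, and $V$ arbitrary. The only points to verify for legitimacy are that $s$ is upward closed (immediate, since $s(p) = P$) and that $\oplus$ supports unrestricted fusion (clear for this finite join-semilattice, exactly as for the three-element structure used in the earlier counterexample). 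Since $\{w\}$ has a unique partition, namely $\{\{w\}\}$, the construction in the proof of Proposition~\ref{prop:app:partition} forces $\Pi_a = \Pi_b = \{\{w\}\}$: for any $x\in P$ we have $\top \in s^{-1}(x)$, so $S^{-1}(x) \neq \emptyset$, and every maximally satisfiable $\Gamma \in S^{-1}(x)$ satisfies $\llbracket \Gamma \rrbracket = \{w\}$. Hence $\Pi_a \sqsubseteq \Pi_b$ holds trivially, as the single cell $\{w\}$ of $\Pi_a$ equals $\bigcup\{\{w\}\}$ with $\{\{w\}\} \subseteq \Pi_b$; yet $a \oplus b = c \neq b$, so $a \not\leq b$ by the definition of parthood. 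That completes the argument.

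There is essentially no obstacle here beyond deciding whether the one-world witness is considered satisfactory; if a more informative example is wanted, I would build a model with $|W| > 1$ mirroring Example~\ref{exmpl:focused_clerk}, in which the problem of choosing the ticket by its service code and the problem of choosing it by departure platform-and-time induce the \emph{same} (nontrivial) partition of $W$ --- so each is an extensional part of the other --- while being kept distinct as atomic objects with incomparable $\leq$. The bookkeeping is slightly heavier there (one must exhibit the worlds, the valuation, and recompute the $\Pi_{(\cdot)}$ cells), but it adds nothing essential to the non-entailment claim, so I would relegate it to a remark rather than the proof proper.
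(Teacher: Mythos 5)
Your counterexample is correct, but it takes a genuinely different route from the paper's. The paper keeps $W$ arbitrary and engineers the failure on the side of the solution function: it picks two atoms $p, q$ with $V(p) = V(q)$ and two problems $a, b$ whose inverse images under $s$ are confined to formulas over $p$ and over $q$ respectively, so that $\Pi_a = \Pi_b$ (both partitions are induced by the common extension of $p$ and $q$) while $a \not\leq b$ because $a \in s(p)$ but $b \notin s(p)$, contradicting upward closure of $s$. You instead collapse $W$ to a singleton, which forces every $\Pi_x$ to be the unique trivial partition $\{\{w\}\}$ regardless of the order structure on $P$, and then exhibit incomparable $a, b$ in the semilattice; the verification that the model is legitimate (upward closure of $s(p) = P$, fusion on the three-element join-semilattice already used in the paper's invalidity theorem) is straightforward, and your observation that $\top \in s^{-1}(x)$ guarantees $S^{-1}(x) \neq \emptyset$ correctly handles the one point where the degenerate construction could go wrong. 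What each approach buys: yours is more economical and makes transparent that $\Pi_{(\cdot)}$ simply cannot see the order on $P$; the paper's is more faithful to the conceptual motivation, since it exhibits two problems that are extensionally equivalent over a \emph{nontrivial} logical space yet intensionally distinct, mirroring Example~\ref{exmpl:focused_clerk} --- which is exactly the richer variant you defer to a remark. Either proof establishes the proposition.
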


\begin{proof}
We prove this statement by providing a counterexample. Let $\cM = (W, R, \mathcal{P}, f, V)$ be a problem-sensitive model, such that for two propositions $p, q \in \Prop$, $V(p) = V(q)$. Then, let $a$ be a problem such that $a \in s(p)$  and for any formula $\psi \in \LangCPL$, if $\psi$ contains any propositions other that $p$, then $a \not\in s(\psi)$. Consequently, $a \not\in s(q)$. Let $b$ be a decision problem such that $b \in s(q)$  and for any formula $\psi \in \LangCPL$, if $\psi$ contains any propositions other that $q$, then $b \not\in s(\psi)$. Consequently, $b\not\in s(p)$.  Obviously, $\Pi_a = \Pi_b$, hence, $\Pi_a \sqsubseteq \Pi_b$. At the same time, $a \not\leq b$, since it is not the case that for any $\varphi \in \LangCPL$, if $a \in s(\varphi)$, then $b \in s(\varphi)$: $a \in s(p)$, $b\not\in s(p)$. 
\end{proof}
\section{Proof of Theorem \ref{thm:comp}: Soundness and Completeness of $\Log$}\label{app:1}

We prove completeness via a canonical model construction. Our proof closely resembles the completeness proof in \cite{Giordani18} except for the construction of the problem sensitive components. We follow \cite{Siemers21} for the construction of $P^c$ and $\oplus^c$, and the construction of $f^c$ is similar to the construction of awareness sets in awareness logics \cite{Fagin1987}.

Consistency, maximal consistency, and derivability for $\Log$ are defined standardly \cite{Blackburn_Rijke_Venema_2001}.

\begin{lemma}\label{lem:derivable} 
The following are derivable in $\Log$:
\begin{multicols}{2}
\begin{enumerate}
\item\label{lem:derivable.1} $\I \overline{\varphi} \leftrightarrow \I (\bigwedge_{p\in Var(\varphi)}\overline{p})$
\item\label{lem:derivable.2}  $\I\overline{\varphi}\imp \I\overline{\psi}$, if $Var(\psi)\subseteq Var(\varphi)$
\item \label{lem:derivable.3} $(\I\varphi\wedge \I\psi) \imp \I(\varphi \vee \psi)$
\item[] \
\end{enumerate}
\end{multicols}
\end{lemma}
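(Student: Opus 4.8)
All three items are pure Hilbert-style derivations from the axioms in Table~\ref{tbl:ax:plainbelief2} — essentially $\mathsf{Ax1}$, $\mathsf{Ax2}$, $\mathsf{Ax4}$, $\mathsf{Ax5}$, together with necessitation for $\Nec$ and propositional reasoning — so there is no semantic argument to run. The one preliminary I would isolate first is a \emph{conjunction-distribution} fact for $\I$: by repeated use of $\mathsf{Ax4}$ and induction on the number of conjuncts, $\vdash \I(\chi_1\wedge\cdots\wedge\chi_n)\leftrightarrow(\I\chi_1\wedge\cdots\wedge\I\chi_n)$ under any bracketing, and hence any two conjunctions built from the same family of conjuncts have provably equivalent $\I$'s. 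Everything else rests on this.

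\textbf{Items 1 and 2.} For item~1, note that for atomic $p$ we have $\overline{p}=p\vee\neg p$ (a one-element conjunction), so both $\overline{\varphi}=\bigwedge_{p\in Var(\varphi)}(p\vee\neg p)$ and $\bigwedge_{p\in Var(\varphi)}\overline{p}$ are conjunctions of the \emph{same} family $\{p\vee\neg p\mid p\in Var(\varphi)\}$; the distribution fact then gives $\vdash\I\overline{\varphi}\leftrightarrow\I(\bigwedge_{p\in Var(\varphi)}\overline{p})$. For item~2, when $Var(\psi)\subseteq Var(\varphi)$ the conjuncts of $\overline{\psi}$ form a subfamily of those of $\overline{\varphi}$; so by distribution $\I\overline{\varphi}$ entails $\bigwedge_{p\in Var(\psi)}\I(p\vee\neg p)$, which re-assembles (again by distribution) to $\I\overline{\psi}$. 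The degenerate case $Var(\psi)=\emptyset$, where $\overline{\psi}=\top$, is covered directly by $\mathsf{Ax1}$.

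\textbf{Item 3.} Here I would give the explicit derivation. Since $\varphi\to(\varphi\vee\psi)$ is a CPL-tautology, necessitation yields $\Nec(\varphi\to(\varphi\vee\psi))$, and the corresponding instance of $\mathsf{Ax5}$ gives $(\I\varphi\wedge\I\overline{\varphi\vee\psi})\to\I(\varphi\vee\psi)$. It remains to derive $(\I\varphi\wedge\I\psi)\to\I\overline{\varphi\vee\psi}$: by $\mathsf{Ax4}$, $\I\varphi\wedge\I\psi$ gives $\I(\varphi\wedge\psi)$; by $\mathsf{Ax2}$, $\I(\varphi\wedge\psi)\to\I\overline{\varphi\wedge\psi}$; and since $Var(\varphi\wedge\psi)=Var(\varphi)\cup Var(\psi)=Var(\varphi\vee\psi)$, the formulas $\overline{\varphi\wedge\psi}$ and $\overline{\varphi\vee\psi}$ coincide (this is exactly where the fixed left-to-right enumeration convention for $\overline{\cdot}$ is used). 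Combining this with the trivial $(\I\varphi\wedge\I\psi)\to\I\varphi$ and feeding $\I\varphi\wedge\I\overline{\varphi\vee\psi}$ into the $\mathsf{Ax5}$ instance by modus ponens delivers $(\I\varphi\wedge\I\psi)\to\I(\varphi\vee\psi)$.

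\textbf{Main obstacle.} None of the steps is deep; the only thing that needs genuine care is the exact syntactic bookkeeping around $\overline{\varphi}$ — namely that the enumeration convention makes $\overline{\varphi\wedge\psi}$ and $\overline{\varphi\vee\psi}$ literally the same string, and that the conjunction-distribution lemma from $\mathsf{Ax4}$ is what licenses treating the variously nested conjunctions $\bigwedge_p(p\vee\neg p)$ and $\bigwedge_p\overline{p}$ as interchangeable under $\I$. So the load-bearing preliminary is that distribution lemma; once it is in place, items~1–2 are routine and item~3 is the short derivation above.
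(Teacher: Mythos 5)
Your proposal is correct and follows essentially the same route as the paper: the paper's (very terse) proof derives item~1 from $\mathsf{Ax2}$/$\mathsf{Ax4}$, item~2 from item~1, and item~3 from exactly the combination you use ($\mathsf{Ax5}$ applied to $\Nec(\varphi\to(\varphi\vee\psi))$, with $\I\overline{\varphi\vee\psi}$ obtained via $\mathsf{Ax4}$ and $\mathsf{Ax2}$ from $\I(\varphi\wedge\psi)$). Your explicit attention to the conjunction-distribution bookkeeping for $\overline{\varphi}$ and the degenerate case $Var(\psi)=\emptyset$ only makes explicit what the paper leaves implicit.
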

\begin{proof} \ref{lem:derivable.1} follows from $\mathsf{Ax4}$; \ref{lem:derivable.2} follows from Theorem \ref{lem:derivable}.\ref{lem:derivable.1}; and \ref{lem:derivable.3} follows from Axioms 7, 4, and 2.
\end{proof}

\begin{lemma}\label{lem:prop.mcs:Log} For every maximally consistent set (mcs) $\Gamma$ of $\Log$ and $\varphi, \psi\in \LangInt$, the following hold:
\begin{multicols}{2}
\begin{enumerate} 
\item\label{lem:prop.mcs:L.0}  $\Gamma\vdash \varphi$ iff $\varphi \in \Gamma$,
\item\label{lem:prop.mcs:L.1} if $\varphi\in \Gamma$ and $\varphi\rightarrow \psi\in \Gamma$ then $\psi\in \Gamma$,
\item\label{lem:prop.mcs:L.2} if $\vdash \varphi$ then $\varphi\in \Gamma$,
\item\label{lem:prop.mcs:L.3} $\varphi\in \Gamma$ and $\psi\in \Gamma$ iff $\varphi\wedge\psi\in \Gamma$,
\item\label{lem:prop.mcs:L.4} $\varphi\in \Gamma$ iff $\neg \varphi\not\in \Gamma$.
\item[] \
\end{enumerate}
\end{multicols}
\end{lemma}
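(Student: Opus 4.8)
The plan is to derive all five items from the single fundamental fact (\ref{lem:prop.mcs:L.0}), establishing that claim first and then reading off the remaining four by short propositional arguments. These are entirely standard properties of maximally consistent sets, so the only ingredients I need are the definition of (maximal) consistency, the reflexivity and monotonicity of $\vdash$, and the fact that $\Log$ contains all of classical propositional logic together with Modus Ponens, which in particular yields the Deduction Theorem. Crucially, none of these steps interact with the modal machinery, so the argument runs exactly as the classical Lindenbaum-style lemma.

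First I would prove (\ref{lem:prop.mcs:L.0}), that $\Gamma\vdash \varphi$ iff $\varphi\in\Gamma$. The right-to-left direction is immediate from reflexivity of $\vdash$. For left-to-right, I would argue by contradiction: suppose $\Gamma\vdash\varphi$ but $\varphi\notin\Gamma$. By maximality of $\Gamma$, the proper extension $\Gamma\cup\{\varphi\}$ is inconsistent, so by the Deduction Theorem $\Gamma\vdash\neg\varphi$; combined with the assumption $\Gamma\vdash\varphi$ this makes $\Gamma$ itself inconsistent, contradicting that $\Gamma$ is an mcs.

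With (\ref{lem:prop.mcs:L.0}) in hand the remaining items follow quickly. For (\ref{lem:prop.mcs:L.1}), from $\varphi\in\Gamma$ and $\varphi\rightarrow\psi\in\Gamma$ Modus Ponens gives $\Gamma\vdash\psi$, so $\psi\in\Gamma$ by (\ref{lem:prop.mcs:L.0}). Item (\ref{lem:prop.mcs:L.2}) follows by monotonicity of $\vdash$: if $\vdash\varphi$ then $\Gamma\vdash\varphi$, hence $\varphi\in\Gamma$. For (\ref{lem:prop.mcs:L.3}), the forward direction uses the CPL theorem $\varphi\rightarrow(\psi\rightarrow(\varphi\wedge\psi))$ with two applications of Modus Ponens, and the backward direction uses $(\varphi\wedge\psi)\rightarrow\varphi$ and $(\varphi\wedge\psi)\rightarrow\psi$; in each case (\ref{lem:prop.mcs:L.0}) converts derivability into membership. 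For (\ref{lem:prop.mcs:L.4}), left-to-right holds because $\varphi,\neg\varphi\in\Gamma$ would yield $\Gamma\vdash\bot$, contradicting consistency; for the converse I would show the contrapositive, namely that $\varphi\notin\Gamma$ forces $\neg\varphi\in\Gamma$, which is precisely the maximality argument used in (\ref{lem:prop.mcs:L.0}): $\Gamma\cup\{\varphi\}$ is inconsistent, so $\Gamma\vdash\neg\varphi$ and thus $\neg\varphi\in\Gamma$.

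I do not expect a genuine obstacle here, as the lemma is routine bookkeeping. The only point deserving a moment's care is to confirm that every step lives purely at the propositional level and invokes only CPL plus Modus Ponens together with the definition of maximal consistency; in particular, nothing appeals to $\mathsf{Ax1}$--$\mathsf{Ax5}$ or to $\mathsf{S5}_\Nec$, so the proof is fully generic and does not depend on the specific intention-theoretic content of $\Log$.
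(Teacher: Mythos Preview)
Your proposal is correct and entirely standard; the paper itself omits the proof, stating only that Lemmas~\ref{lem:prop.mcs:Log} and~\ref{lem:lindenbaum} ``are standard and we sometimes use them without explicit mention.'' Your argument is exactly the routine one the authors have in mind.
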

  
 \begin{lemma}[Lindenbaum's Lemma]\label{lem:lindenbaum}
 Every  $\Log$-consistent set can be extended to a maximally $\Log$-consistent one.
 \end{lemma}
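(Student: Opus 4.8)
The plan is to carry out the standard Lindenbaum construction, extending a given $\Log$-consistent set by deciding one formula at a time. Two structural facts about the setting make this work, and I would flag them at the outset. First, the language $\LangInt$ is countable: the set $\Prop$ of atomic propositions is countable by assumption, and every formula is a finite string built from $\Prop$ using the finitely many symbols $\neg, \lor, \land, \Nec, \I, \top$ together with parentheses, so $\LangInt$ is a countable union of finite sets of strings. Hence we may fix an enumeration $\varphi_0, \varphi_1, \varphi_2, \dots$ of all of $\LangInt$. Second, derivability $\vdash$ in $\Log$ is \emph{finitary}: a set $\Delta$ is $\Log$-inconsistent iff some finite subset of $\Delta$ is already $\Log$-inconsistent, since any derivation of a contradiction invokes only finitely many premises. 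This is the only genuinely load-bearing observation in the proof.

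The construction proceeds as follows. Let $\Gamma$ be the given $\Log$-consistent set and put $\Gamma_0 := \Gamma$. Given $\Gamma_n$, define
\[
\Gamma_{n+1} :=
\begin{cases}
\Gamma_n \cup \{\varphi_n\} & \text{if } \Gamma_n \cup \{\varphi_n\} \text{ is } \Log\text{-consistent},\\
\Gamma_n & \text{otherwise},
\end{cases}
\]
and finally set $\Gamma^* := \bigcup_{n \in \mathbb{N}} \Gamma_n$. By construction the sets $\Gamma_n$ form an increasing chain, and an easy induction on $n$ shows each $\Gamma_n$ is $\Log$-consistent: $\Gamma_0$ is consistent by hypothesis, and the inductive step preserves consistency precisely because we only adjoin $\varphi_n$ when doing so keeps the set consistent.

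It remains to verify that $\Gamma^*$ is maximally $\Log$-consistent. For consistency, suppose toward a contradiction that $\Gamma^*$ is inconsistent. By finitariness, some finite subset $\Delta \subseteq \Gamma^*$ is already inconsistent; since $\Delta$ is finite and the $\Gamma_n$ form an increasing chain, $\Delta \subseteq \Gamma_N$ for some $N$, so $\Gamma_N$ is inconsistent, contradicting the previous paragraph. For maximality, it suffices to show that whenever $\Gamma^* \cup \{\varphi\}$ is $\Log$-consistent we have $\varphi \in \Gamma^*$. Take such a $\varphi$, and let $n$ be its index, so $\varphi = \varphi_n$. Since $\Gamma_n \subseteq \Gamma^*$, the set $\Gamma_n \cup \{\varphi_n\}$ is a subset of the consistent set $\Gamma^* \cup \{\varphi_n\}$ and is therefore itself consistent; hence by the defining case split $\varphi_n \in \Gamma_{n+1} \subseteq \Gamma^*$, as required. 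Together with consistency, this establishes that $\Gamma^*$ is a maximally $\Log$-consistent extension of $\Gamma$.

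There is no real obstacle here: the argument is entirely standard and goes through verbatim for any finitary deductive system over a countable language, so nothing in the problem-sensitive machinery (the operator $\I$, the modality $\Nec$, or axioms $\mathsf{Ax1}$–$\mathsf{Ax5}$) plays a special role. The only points demanding any care are the two highlighted above — countability of $\LangInt$, which licenses the enumeration, and finitariness of $\vdash$, which is what transfers consistency from the finite stages $\Gamma_n$ to their union $\Gamma^*$.
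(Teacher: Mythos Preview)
Your proof is correct and is exactly the standard Lindenbaum construction the paper has in mind; the paper itself omits the proof entirely, simply noting that it is standard. Nothing to add.
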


 The proofs of  Lemmas \ref{lem:prop.mcs:Log} and \ref{lem:lindenbaum} are standard and we sometimes use them without explicit mention. We are now ready define our canonical model.
 
 Let $W^c$ be the set of all maximally $\Log$-consistent sets.  For each $\Gamma\in W^c$, define 
 \begin{align}
 \Gamma[\Nec] \ & :=\{\varphi\in \cL: \Nec\varphi\in \Gamma\}, \notag\\
\Gamma[I] \ & := \ \{\varphi \in \LangInt : I\psi\land \Nec (\psi\imp \varphi) \in \Gamma \mbox{ for some } \psi\in \LangCPL\} \notag
 \end{align}

Moreover, we define $\sim_\Nec$ and $\rightarrow^c_\varphi$ on $W^c$, respectively, as (1) $\Gamma \sim_\Nec \Delta  \mbox{ iff } \Gamma[\Nec] \subseteq \Delta$; and (2)
$\Gamma \imp^c \Delta \mbox{ iff } \Gamma[\I] \subseteq \Delta.$
Since $\Nec$ is an $\mathsf{S5}$ modality,  $\sim_\Nec$ is an equivalence relation \cite{Blackburn_Rijke_Venema_2001}. For any maximally $\Log$-consistent set $\Gamma$, we denote by $[\Gamma]_\Nec$ the equivalence class of $\Gamma$ induced by $\sim_\Nec$, i.e.,  $[\Gamma]_\Nec=\{\Delta\in W^c : \Gamma\sim_\Nec \Delta\}$.  The following lemma shows that $\rightarrow^c\subseteq \sim_\Nec$.

\begin{lemma}\label{lem:subset}
For all $\Gamma, \Delta\in W^c$, if $\Gamma \rightarrow^c \Delta$ , then $\Gamma \sim_\Nec \Delta$. 
\end{lemma}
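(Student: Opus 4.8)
The plan is to unpack the definitions of $\rightarrow^c$ and $\sim_\Nec$ and show that $\Gamma[\Nec]\subseteq \Delta$ whenever $\Gamma[I]\subseteq \Delta$. So suppose $\Gamma\rightarrow^c\Delta$, i.e. $\Gamma[I]\subseteq\Delta$, and take an arbitrary $\varphi\in\Gamma[\Nec]$, meaning $\Nec\varphi\in\Gamma$. I want to conclude $\varphi\in\Delta$, and the natural route is to show $\varphi\in\Gamma[I]$, which then gives $\varphi\in\Delta$ immediately. Recalling that $\Gamma[I]=\{\varphi\in\LangInt : I\psi\wedge\Nec(\psi\imp\varphi)\in\Gamma \text{ for some }\psi\in\LangCPL\}$, it suffices to exhibit a witness $\psi\in\LangCPL$ with $I\psi\in\Gamma$ and $\Nec(\psi\imp\varphi)\in\Gamma$.

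The obvious candidate witness is $\psi:=\top$. By $\mathsf{Ax1}$ we have $I\top\in\Gamma$ (it is a theorem, hence in every mcs). For the second conjunct, note that $\Nec\varphi\in\Gamma$, and since $\varphi\imp(\top\imp\varphi)$ is a propositional tautology, $\Nec$-necessitation and $\mathsf{S5}_\Nec$-normality give $\vdash \Nec\varphi\imp\Nec(\top\imp\varphi)$; hence $\Nec(\top\imp\varphi)\in\Gamma$. Actually $\top\imp\varphi$ is, modulo the abbreviations, just $\neg\top\vee\varphi$, and $\Nec(\neg\top\vee\varphi)$ follows from $\Nec\varphi$ by the same reasoning. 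Combining, $I\top\wedge\Nec(\top\imp\varphi)\in\Gamma$ by Lemma \ref{lem:prop.mcs:Log}.\ref{lem:prop.mcs:L.3}, so $\varphi\in\Gamma[I]$, and therefore $\varphi\in\Delta$. Since $\varphi$ was arbitrary, $\Gamma[\Nec]\subseteq\Delta$, i.e. $\Gamma\sim_\Nec\Delta$.

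There is essentially no obstacle here; the only mild subtlety is making sure $\top\in\LangCPL$ so that it is a legitimate witness (it is, by the grammar for $\LangCPL$), and that the abbreviation $\top\imp\varphi:=\neg\top\vee\varphi$ interacts correctly with $\Nec$, which it does since $\mathsf{S5}_\Nec$ includes the $\mathsf{K}$-axiom and necessitation. I would present this as a short three-or-four-line argument, citing $\mathsf{Ax1}$, $\mathsf{S5}_\Nec$, and Lemma \ref{lem:prop.mcs:Log}.
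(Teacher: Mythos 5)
Your proof is correct and follows exactly the paper's own argument: both use $\top$ as the witness, invoking $\mathsf{Ax1}$ for $\I\top\in\Gamma$ and $\mathsf{S5}_\Nec$ to pass from $\Nec\varphi$ to $\Nec(\top\imp\varphi)$, thereby placing $\varphi$ in $\Gamma[\I]\subseteq\Delta$. No discrepancies to report.
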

\begin{proof}
Let $\Gamma, \Delta\in W^c$ such that $\Gamma \rightarrow^c \Delta$, i.e., that $\Gamma[\I]\subseteq \Delta$. Let $\varphi \in \Gamma[\Nec]$. This means that $\Nec\varphi\in \Gamma$. Then, by $\mathsf{S5}_\Nec$ (since $\vdash \Nec\varphi \biimp \Nec(\top \rightarrow \varphi)$) and Lemma \ref{lem:prop.mcs:Log}, we have that $\Nec(\top \imp \varphi)\in \Gamma$. Moreover, by $\mathsf{Ax1}$, we have that  $\I\top\in \Gamma$. Then, by the definition of $\Gamma[\I]$, we conclude that $\varphi\in \Gamma[\I]$. By the first assumption that  $\Gamma[\I]\subseteq \Delta$, we obtain $\varphi\in \Delta$. Therefore, $\Gamma\sim_\Nec\Delta$.
\end{proof}

\begin{lemma}\label{lem:conjunction}
 Given a mcs $\Gamma$, for all finite $\Phi \subseteq \Gamma[\I]$, we have $\bigwedge\Phi\in  \Gamma[\I]$.
 \end{lemma}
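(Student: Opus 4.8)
The plan is to exhibit, for a given finite $\Phi=\{\varphi_1,\dots,\varphi_n\}\subseteq \Gamma[\I]$, a single classical-propositional witness for $\bigwedge_{i=1}^n\varphi_i$, and then conclude by the definition of $\Gamma[\I]$. The base case $\Phi=\emptyset$ is handled separately: there $\bigwedge\Phi=\top$ by convention, and $\top\in\Gamma[\I]$ since $\I\top\in\Gamma$ by $\mathsf{Ax1}$ while $\Nec(\top\imp\top)\in\Gamma$ trivially, so $\top$ is witnessed by $\psi:=\top\in\LangCPL$.

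For the nonempty case, unfold the definition of $\Gamma[\I]$: for each $i\le n$ there is some $\psi_i\in\LangCPL$ with $\I\psi_i\wedge\Nec(\psi_i\imp\varphi_i)\in\Gamma$. The proposed witness for $\bigwedge_{i=1}^n\varphi_i$ is $\psi:=\bigwedge_{i=1}^n\psi_i$, which lies in $\LangCPL$ because $\LangCPL$ is closed under $\wedge$. It then suffices to establish (a) $\I\psi\in\Gamma$ and (b) $\Nec\bigl(\psi\imp\bigwedge_{i=1}^n\varphi_i\bigr)\in\Gamma$, since together with $\psi\in\LangCPL$ these give $\bigwedge_{i=1}^n\varphi_i\in\Gamma[\I]$ directly.

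For (a): from $\I\psi_i\wedge\Nec(\psi_i\imp\varphi_i)\in\Gamma$ and Lemma \ref{lem:prop.mcs:Log} we get $\I\psi_i\in\Gamma$ for every $i$, and then repeated use of the agglomeration axiom $\mathsf{Ax4}$ together with mcs closure under conjunction yields $\I\bigl(\bigwedge_{i=1}^n\psi_i\bigr)\in\Gamma$, i.e. $\I\psi\in\Gamma$. (The precise bracketing of $\bigwedge_i\psi_i$ is immaterial, as $\mathsf{Ax4}$ plus propositional reasoning proves associativity under $\I$.) For (b): since $\Nec$ distributes over finite conjunctions already in $\mathsf{K}$, from $\Nec(\psi_i\imp\varphi_i)\in\Gamma$ for all $i$ we obtain $\Nec\bigl(\bigwedge_{i=1}^n(\psi_i\imp\varphi_i)\bigr)\in\Gamma$; as $\bigwedge_{i=1}^n(\psi_i\imp\varphi_i)\imp\bigl((\bigwedge_{i=1}^n\psi_i)\imp\bigwedge_{i=1}^n\varphi_i\bigr)$ is a propositional tautology, the monotonicity rule for $\Nec$ (derivable from necessitation and the $\mathsf{K}$ axiom, both available in $\mathsf{S5}_\Nec$) gives $\Nec\bigl((\bigwedge_{i=1}^n\psi_i)\imp\bigwedge_{i=1}^n\varphi_i\bigr)\in\Gamma$, which is (b).

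There is no genuine obstacle here; the argument is routine mcs and normal-modal bookkeeping using only $\mathsf{Ax1}$, $\mathsf{Ax4}$, $\mathsf{S5}_\Nec$, and Lemma \ref{lem:prop.mcs:Log}. The two points that deserve a line of care are (i) verifying that the chosen witness $\bigwedge_{i=1}^n\psi_i$ really is a $\LangCPL$-formula, so that it is admissible in the definition of $\Gamma[\I]$, and (ii) the empty-conjunction convention in the base case.
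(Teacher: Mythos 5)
Your proof is correct and follows essentially the same route as the paper's: take the conjunction $\bigwedge_i\psi_i$ of the individual witnesses, use $\mathsf{Ax4}$ to get $\I(\bigwedge_i\psi_i)\in\Gamma$, and use $\mathsf{S5}_\Nec$ reasoning to get $\Nec(\bigwedge_i\psi_i\imp\bigwedge_i\varphi_i)\in\Gamma$. Your explicit treatment of the empty-conjunction base case via $\mathsf{Ax1}$ is a small extra care the paper leaves implicit, but the argument is otherwise identical.
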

 \begin{proof}
 Let $\Phi=\{\varphi_1, \dots, \varphi_n\} \subseteq \Gamma[\I]$. This means that, for each $\varphi_j$ with $1\leq j\leq n$, there is a $\psi_j\in \LangCPL$ such that $\I\psi_j\wedge \Nec(\psi_j\imp \varphi_j)\in \Gamma$. Thus, $\bigwedge_{1\leq j\leq n} \I\psi_j \wedge \bigwedge_{1\leq j\leq n} \Nec (\psi_j\imp \varphi_j)\in \Gamma$. Then, by $\mathsf{Ax4}$, we obtain that $\I(\bigwedge_{j\leq n}\psi_j) \in \Gamma$. By $\mathsf{S5}_\Nec$, we also have $\Nec (\bigwedge_{j\leq n}\psi_j \imp \bigwedge_{j\leq n} \varphi_j) \in \Gamma$. Therefore, $\bigwedge \Phi\in \Gamma[\I]$.
 \end{proof}

\begin{lemma}\label{lem:consistent}
Given a mcs $\Gamma$ of $\Log$, $\Gamma[I]$ is consistent. 
\end{lemma}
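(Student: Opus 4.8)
The plan is to argue by contradiction. Suppose $\Gamma[I]$ is inconsistent. Then, by compactness of classical derivability, there is a finite subset $\Phi \subseteq \Gamma[I]$ such that $\vdash \neg \bigwedge \Phi$. By Lemma \ref{lem:conjunction} we have $\bigwedge \Phi \in \Gamma[I]$, so it suffices to derive a contradiction from the assumption that a single formula $\chi \in \Gamma[I]$ with $\vdash \neg \chi$ exists. (If $\Phi = \emptyset$ the empty conjunction is $\top$, which is consistent, so we may assume $\Phi$ nonempty and hence $\chi$ well-defined.)

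Now unpack membership in $\Gamma[I]$: since $\chi \in \Gamma[I]$, there is some $\psi \in \LangCPL$ with $I\psi \wedge \Nec(\psi \imp \chi) \in \Gamma$. From $\vdash \neg \chi$ we get $\vdash \Nec \neg \chi$ by the necessitation rule of $\mathsf{S5}_\Nec$, hence $\Nec\neg\chi \in \Gamma$ by Lemma \ref{lem:prop.mcs:Log}.\ref{lem:prop.mcs:L.2}. Combining $\Nec(\psi \imp \chi)$ and $\Nec \neg \chi$ inside $\Gamma$ using $\mathsf{S5}_\Nec$ (normality), we obtain $\Nec \neg \psi \in \Gamma$, i.e.\ $\Nec(\psi \imp \bot) \in \Gamma$. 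The key move is then to apply $\mathsf{Ax5}$, i.e.\ $\Nec(\psi \imp \bot) \imp ((I\psi \wedge I\overline{\bot}) \imp I\bot)$. We have $I\psi \in \Gamma$ by assumption, and $I\overline{\bot} \in \Gamma$ because $\overline{\bot}$ is the empty conjunction $\top$ (as $Var(\bot) = \emptyset$, recalling $\bot := \neg\top$ and $\top$ is atom-free) and $I\top \in \Gamma$ by $\mathsf{Ax1}$. Hence $I\bot \in \Gamma$. But $\mathsf{Ax3}$ gives $I\bot \imp \neg I\neg\bot$, and since $\vdash \neg\bot \biimp \top$ we have $I\neg\bot \leftrightarrow I\top$ is \emph{not} available for free (our logic does not close $I$ under equivalence!) — so this route needs care. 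Instead, use $\mathsf{Ax4}$: from $I\bot$ and $\mathsf{Ax3}$ with $\varphi := \bot$ we get $I\bot \imp \neg I\top$, contradicting $\mathsf{Ax1}$'s $I\top \in \Gamma$ together with Lemma \ref{lem:prop.mcs:Log}.\ref{lem:prop.mcs:L.4}. Wait — $\mathsf{Ax3}$ instantiated at $\bot$ reads $I\bot \imp \neg I\neg\bot = I\bot \imp \neg I\neg\neg\top$, which again is not literally $\neg I\top$.

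So the cleanest instantiation is $\mathsf{Ax3}$ at $\varphi := \top$: $I\top \imp \neg I\neg\top$, i.e.\ $I\top \imp \neg I\bot$. Since $I\top \in \Gamma$ by $\mathsf{Ax1}$, modus ponens gives $\neg I\bot \in \Gamma$, contradicting $I\bot \in \Gamma$ established above. This contradiction shows $\Gamma[I]$ is consistent.

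\textbf{Main obstacle.} The delicate point is getting $I\bot \in \Gamma$ from $\mathsf{Ax5}$: one must (i) correctly identify $\overline{\bot}$ with $\top$ using the definition of the overline abbreviation and the fact that $\bot$ contains no atomic propositions, so that $I\overline{\bot} = I\top$ is supplied by $\mathsf{Ax1}$, and (ii) correctly discharge the $\Nec(\psi\imp\bot)$ hypothesis, which requires the normality/necessitation reasoning in $\mathsf{S5}_\Nec$ applied to the classical fact $\vdash \neg\chi$. Everything else is routine propositional and mcs bookkeeping via Lemmas \ref{lem:prop.mcs:Log}, \ref{lem:conjunction}, and \ref{lem:lindenbaum}.
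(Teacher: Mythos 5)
Your argument is correct, and it reaches the contradiction by a genuinely different (and somewhat more economical) route than the paper's. Both proofs share the same skeleton: compactness to get a finite inconsistent $\Phi\subseteq\Gamma[I]$, Lemma \ref{lem:conjunction} to conjoin it, unpacking membership in $\Gamma[I]$ to obtain a witness $\psi$ with $\I\psi\in\Gamma$ and $\Nec(\psi\imp\bigwedge\Phi)\in\Gamma$, then $\mathsf{Ax5}$ to import an offending intention, and finally $\mathsf{Ax3}$ for the contradiction. The paper, however, works with $\vdash\bigwedge\Phi\imp\neg\varphi_j$, shows that both $\varphi_j$ and $\neg\varphi_j$ lie in $\Gamma[I]$, extracts a \emph{second} witness $\psi'$ for $\varphi_j$, discharges the $\I\overline{\neg\psi'}$ premise of $\mathsf{Ax5}$ via $\mathsf{Ax2}$ and Lemma \ref{lem:derivable}.\ref{lem:derivable.2}, and contradicts $\I\psi'\imp\neg \I\neg\psi'$. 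You instead use $\vdash\neg\bigwedge\Phi$ directly, push everything down to $\Nec(\psi\imp\bot)$ with a single witness, discharge the $\I\overline{\bot}$ premise via $\mathsf{Ax1}$, and contradict $\I\top\imp\neg \I\neg\top$ (which is literally $\I\top\imp\neg \I\bot$ since $\bot:=\neg\top$). Your version avoids the second witness and Lemma \ref{lem:derivable}.\ref{lem:derivable.2} entirely; what it buys this with is the convention that $\overline{\bot}$, the conjunction over the empty set $Var(\bot)=\emptyset$, denotes $\top$ --- a convention the paper uses only implicitly, so you should state it explicitly. Two small clean-ups: the false starts in the middle of your write-up (the attempts to instantiate $\mathsf{Ax3}$ at $\bot$ before settling on $\top$) should be deleted, since only the final instantiation is needed; and the degenerate case $\Phi=\emptyset$ is best dispatched by noting that $\vdash\bot$ would contradict the existence of any mcs at all, rather than by appeal to the consistency of $\top$.
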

\begin{proof}
Assume,  toward contradiction,  that $\Gamma[I]$ is not consistent, i.e., $\Gamma[I]\vdash \bot$.  This means that there is a finite subset $\Phi=\{\varphi_1, \dots, \varphi_n\}\subseteq\Gamma[I]$ such that $\vdash \bigwedge \Phi\supset \neg\varphi_j$ for some $j\leq n$. By Lemma \ref{lem:conjunction}, we have that $\bigwedge \Phi\in \Gamma[I]$, thus,  there is a $\psi\in \LangCPL$ such that  $\I\psi\in \Gamma$ and $\Nec(\psi\imp  \bigwedge \Phi)\in \Gamma$. Since $\vdash \bigwedge \Phi\imp \neg\varphi_j$, by $\mathsf{S5}_\Nec$, we also have  $\Nec(\psi\imp  \neg\varphi_j)\in \Gamma$. Hence, $\neg\varphi_j\in \Gamma[\I]$ too. As $\varphi_j\in \Gamma[\I]$,  we also have a $\psi'\in \LangCPL$ with $\I\psi'\in \Gamma$ and $\Nec(\psi'\imp\varphi_j)\in \Gamma$. From $\Nec(\psi\imp \neg \varphi_j)\in \Gamma$ and $\Nec(\psi'\imp\varphi_j)\in \Gamma$, by $\mathsf{S5}_\Nec$, we obtain that $\Nec(\psi\imp\neg \psi')\in \Gamma$. As $\I\psi'\in \Gamma$, by $\mathsf{Ax2}$ and Lemma \ref{lem:derivable}.\ref{lem:derivable.2}, $\I\overline{\neg \psi'}\in \Gamma$.  Therefore, $\I\overline{\neg \psi'}\in \Gamma$, $\Nec(\psi\supset\neg \psi')\in \Gamma$, $I\psi\in \Gamma$, by $\mathsf{Ax5}$, imply that $\I\neg\psi'\in \Gamma$, contradicting the consistency of $\Gamma$:  $\I\psi'\in \Gamma$ implies $\neg \I\neg \psi'\in \Gamma$, by $\mathsf{Ax3}$. Therefore, $\Gamma[\I]$ is consistent. 
\end{proof}

Given a mcs $\Gamma_0$ of $\Log$, the canonical model for $\Gamma_0$ is a tuple $\cM^c=\CModel$, where 
 \begin{itemize}
 \item $[\Gamma_0]_\Nec$ is as described above.
 \item $R^c = \rightarrow^c \cap ([\Gamma_0]_\Nec\times [\Gamma_0]_\Nec)$
 \item $\cD^c=(P^c, \oplus^c, \s^c)$, where
\begin{itemize}
\item $P^c=\cP(\Prop)$
 \item $\oplus^c:P^c \times P^c \rightarrow P^c$ such that for all $A, B\in \cP(\Prop)$, $A\oplus^c B=A\cup B$.
 \item $\s^c: \cL \rightarrow \cP(P^c)$ such that $\s^c(p)= \{A \in P^c \ | \ p\in A\}$ for all $p\in \Prop$. $\s$ extends to the language $\LangCPL$ as in Definition \ref{def:atomic_problems_revised}.
 \end{itemize}
 \item $f^c: [\Gamma_0]_\Nec  \rightarrow P^c$ such that $f^c(\Gamma)=\{p\in\Prop \ | \ I\overline{p}\in \Gamma\}$.
 \item $V^c:  \Prop \rightarrow \cP([\Gamma_0]_\Nec)$ such that $V^c(p)=\{\Gamma \in [\Gamma_0]_\Nec : p\in \Gamma\}$.
\end{itemize}

\begin{lemma}\label{lem:var}
For all $\varphi\in \LangCPL$, $\s^c(\varphi)=\{A\subseteq \Prop \ | \ Var(\varphi)\subseteq A\}$.
\end{lemma}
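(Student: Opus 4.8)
The plan is to prove Lemma \ref{lem:var} by induction on the structure of $\varphi\in \LangCPL$, showing that $\s^c(\varphi)=\{A\subseteq \Prop \ | \ Var(\varphi)\subseteq A\}$, where $\s^c$ is the extension of the canonical solution assignment to the whole propositional language as stipulated in Definition \ref{def:atomic_problems_revised}. The base cases are immediate: for $\varphi:=p$ atomic, $Var(p)=\{p\}$ and $\s^c(p)=\{A\in P^c \ | \ p\in A\}=\{A\subseteq \Prop \ | \ \{p\}\subseteq A\}$ by definition of the canonical model; for $\varphi:=\top$, $Var(\top)=\emptyset$ and $\s^c(\top)=P^c=\cP(\Prop)=\{A\subseteq\Prop \ | \ \emptyset\subseteq A\}$. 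The negation case is also easy: $\s^c(\neg\psi)=\s^c(\psi)$ and $Var(\neg\psi)=Var(\psi)$, so the claim transfers directly from the induction hypothesis.

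The two interesting cases are disjunction and conjunction, and both hinge on the set-theoretic fact that $Var(\psi\lor\chi)=Var(\psi\land\chi)=Var(\psi)\cup Var(\chi)$. For $\varphi:=\psi\lor\chi$, by definition $\s^c(\psi\lor\chi)=\s^c(\psi)\cap \s^c(\chi)$, which by the induction hypothesis equals $\{A \ | \ Var(\psi)\subseteq A\}\cap \{A \ | \ Var(\chi)\subseteq A\}=\{A \ | \ Var(\psi)\cup Var(\chi)\subseteq A\}=\{A \ | \ Var(\psi\lor\chi)\subseteq A\}$, as desired. For $\varphi:=\psi\land\chi$, by definition $\s^c(\psi\land\chi)=\{a\oplus^c b \ | \ a\in \s^c(\psi), b\in \s^c(\chi)\}$, and since $\oplus^c$ is set-theoretic union in the canonical model, this is $\{a\cup b \ | \ a\in \s^c(\psi), b\in \s^c(\chi)\}$. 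Applying the induction hypothesis, this becomes $\{a\cup b \ | \ Var(\psi)\subseteq a,\ Var(\chi)\subseteq b\}$. I would then argue the two inclusions: any such $a\cup b$ contains $Var(\psi)\cup Var(\chi)=Var(\psi\land\chi)$, giving one direction; conversely, given any $A\supseteq Var(\psi\land\chi)$, take $a=b=A$ (legitimate since $A\supseteq Var(\psi)$ and $A\supseteq Var(\chi)$, and $A\cup A=A$), witnessing $A\in \s^c(\psi\land\chi)$.

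This lemma is entirely routine; there is no real obstacle. The only point requiring a flicker of care is the conjunction case, where one must exhibit a concrete pair of witnesses $(a,b)$ for the right-to-left inclusion — and the idempotence of $\oplus^c$ (i.e.\ $A\cup A=A$) is exactly what makes the choice $a=b=A$ work. This is the canonical-model analogue of Lemma \ref{lemma:MON} applied in the special case where the semilattice is $(\cP(\Prop),\cup)$; indeed one could alternatively derive the result by combining Lemma \ref{lemma:MON} with a short argument that $\s^c$ realizes each variable exactly, but the direct induction is cleaner. The lemma will subsequently be used to pin down the behaviour of $f^c$ and to verify the truth lemma clause for $\I\alpha$, in particular to check that $f^c(\Gamma)\in \s^c(\alpha)$ iff $\I\overline{\alpha}\in\Gamma$.
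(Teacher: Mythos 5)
Your proof is correct and follows essentially the same route as the paper's: induction on $\varphi$, with the base and negation cases immediate and the binary cases reduced to $Var(\psi\lor\chi)=Var(\psi\land\chi)=Var(\psi)\cup Var(\chi)$ plus elementary set theory. The only difference is that you spell out the conjunction case (witnesses $a=b=A$ via idempotence of $\cup$) where the paper dismisses it as analogous to the disjunction case, which is a harmless elaboration.
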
  
\begin{proof}
The proof follows by induction on the structure of $\varphi$. The case of constant $\top$ is trivial: $Var(\top) = \emptyset$, hence, $Var(\top) \subseteq A$ for any $A \in P^c$, i.e. $\{A \subseteq \Prop \: |\: Var(\top) \subseteq A\} = \cP(\Prop) = P^c$, which by Definition \ref{def:atomic_problems_revised} is equal to $s^c(\top)$. The case for atomic propositions follows directly by the defn. of $\s^c$. Case $\varphi:=\neg\psi$ follows from the induction hypothesis (IH) and the fact that $\s^c(\neg\psi)=\s^c(\psi)$ and  $Var(\neg\psi)=Var(\psi)$. We only present the proof for Case: $\varphi:=\psi\lor \chi$ and Case: $\varphi:=\psi\land \chi$ follows similarly:

\smallskip

\noindent Case: $\varphi:=\psi\lor \chi$: 
\begin{align}
\s^c(\psi\lor \chi) & =\s^c(\psi)\cap\s^c(\chi)  \tag{by the defn. of $\s^c$}\\
& = \{A\subseteq \Prop \ | \ Var(\psi)\subseteq A\} \cap \{A\subseteq \Prop \ | \ Var(\chi)\subseteq A\} \tag{by IH}\\
& = \{A\subseteq \Prop \ | \ Var(\psi)\cup Var(\chi)\subseteq A\} \tag{simple set theory}\\
& = \{A\subseteq \Prop \ | \ Var(\psi\lor\chi)\subseteq A\} \tag{$Var(\psi\lor\chi)=Var(\psi)\cup Var(\chi)$}
\end{align}
\end{proof}

\begin{corollary}\label{cor:mon:canonical}
For any $A, B\in \D^c$ and $\varphi\in \LangCPL$, if $A\subseteq B$ and $A\in \s^c(\varphi)$, then $B\in \s^c(\varphi)$.
\end{corollary}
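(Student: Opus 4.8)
The plan is to read the corollary off directly from the explicit characterization of $\s^c$ established in Lemma \ref{lem:var}. The key observation is that in the canonical model the fusion operation $\oplus^c$ is just set union, so the induced parthood relation coincides with set inclusion: $A \leq^c B$ iff $A \oplus^c B = B$ iff $A \cup B = B$ iff $A \subseteq B$. Thus the corollary is precisely the monotonicity requirement (MON$_\s$) on a solution-assignment function, instantiated in the canonical setting; verifying it confirms that $\cD^c=(P^c, \oplus^c, \s^c)$ is indeed a legitimate problems model in the sense of Definition \ref{def:atomic_problems_revised}.

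Concretely, I would proceed in three short steps. First, I unpack the hypothesis $A \in \s^c(\varphi)$ via Lemma \ref{lem:var}, which yields $Var(\varphi) \subseteq A$. Second, I combine this with the assumption $A \subseteq B$ and use transitivity of set inclusion to obtain $Var(\varphi) \subseteq B$. Third, I apply Lemma \ref{lem:var} once more, this time from right to left, to conclude that $B \in \s^c(\varphi)$.

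There is no genuine obstacle here: all the substance resides in Lemma \ref{lem:var}, whose proof is a routine induction on the structure of $\varphi$ (the atomic case being immediate from $\s^c(p)=\{A \mid p \in A\}$, and the connective cases relying on $Var(\neg\psi)=Var(\psi)$ together with $Var(\psi \lor \chi)=Var(\psi \land \chi)=Var(\psi)\cup Var(\chi)$). The only conceptual point worth flagging is the identification of the abstract parthood relation with $\subseteq$ on $\cP(\Prop)$, which is exactly what licenses the direct appeal to Lemma \ref{lem:var} and lets the whole argument collapse to transitivity of inclusion.
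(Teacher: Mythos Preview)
Your proposal is correct and matches the paper's intended approach: the corollary is stated immediately after Lemma~\ref{lem:var} with no separate proof, precisely because it follows by transitivity of $\subseteq$ from the characterization $\s^c(\varphi)=\{A\subseteq\Prop\mid Var(\varphi)\subseteq A\}$. Your remark that $\leq^c$ coincides with $\subseteq$ is also on point, as this is exactly what is used in Lemma~\ref{lem:canonicalmodel} to verify that $\s^c$ is upward closed.
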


\begin{lemma}\label{lem:canonicalmodel}
Given a mcs $\Gamma_0$, the canonical model $\cM^c=\CModel$ for $\Gamma_0$ is a problem-sensitive model.  
\end{lemma}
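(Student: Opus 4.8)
The plan is to walk through every clause of the definition of a problem-sensitive model and verify it for $\cM^c=\CModel$. All the clauses except one are immediate from the structure of the powerset lattice $\cP(\Prop)$ and standard facts about maximally consistent sets; the only clause that requires an argument is the seriality of $R^c$, and that reduces at once to the lemmas already in hand (above all Lemma \ref{lem:consistent}).

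First I would dispatch the routine components. The carrier $[\Gamma_0]_\Nec$ is non-empty, since $\sim_\Nec$ is reflexive (it is an $\mathsf{S5}$-equivalence relation, as already observed), so $\Gamma_0\in[\Gamma_0]_\Nec$. For the problems model $\cD^c=(P^c,\oplus^c,\s^c)$: we have $P^c=\cP(\Prop)\neq\emptyset$; the operation $\oplus^c=\cup$ on $\cP(\Prop)$ is idempotent, commutative and associative, and admits unrestricted fusion because $\cP(\Prop)$ is closed under arbitrary unions, so $\bigoplus^c\mathcal{A}=\bigcup\mathcal{A}\in P^c$ for every $\mathcal{A}\subseteq P^c$ (including $\mathcal{A}=\emptyset$, with $\bigcup\emptyset=\emptyset\in P^c$); the induced parthood order is thus just $\subseteq$. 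The atomic clause $\s^c(p)=\{A\in P^c : p\in A\}$ is upward closed, for if $A\subseteq B$ and $p\in A$ then $p\in B$; and since $\s^c$ is extended to $\LangCPL$ by exactly the recursive clauses of Definition \ref{def:atomic_problems_revised}, nothing further is needed there (monotonicity over all of $\LangCPL$ then follows from Lemma \ref{lemma:MON}, or from Lemma \ref{lem:var} together with Corollary \ref{cor:mon:canonical}). Finally, $f^c$ is a genuine function: for each $\Gamma\in[\Gamma_0]_\Nec$ the set $\{p\in\Prop : \I\overline{p}\in\Gamma\}$ is a determinate subset of $\Prop$, hence an element of $P^c$; and $V^c$ is a valuation by construction. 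This leaves only the condition on $R^c$.

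The crux is to show that $R^c$ is serial on $[\Gamma_0]_\Nec$. Fix $\Gamma\in[\Gamma_0]_\Nec$. By Lemma \ref{lem:consistent}, $\Gamma[\I]$ is $\Log$-consistent, so by Lindenbaum's Lemma \ref{lem:lindenbaum} it extends to a maximally consistent $\Delta$ with $\Gamma[\I]\subseteq\Delta$, i.e.\ $\Gamma\rightarrow^c\Delta$. By Lemma \ref{lem:subset} this yields $\Gamma\sim_\Nec\Delta$, and since $\Gamma\sim_\Nec\Gamma_0$, transitivity of $\sim_\Nec$ gives $\Delta\in[\Gamma_0]_\Nec$. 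Hence $(\Gamma,\Delta)\in\rightarrow^c\cap([\Gamma_0]_\Nec\times[\Gamma_0]_\Nec)=R^c$, so $R^c(\Gamma)\neq\emptyset$, as required.

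I expect seriality of $R^c$ to be the only delicate point, because it is the one place where the interaction of $\I$, $\Nec$ and the axioms $\mathsf{Ax2}$--$\mathsf{Ax5}$ actually gets used; but the substantive part of that has already been packaged into Lemma \ref{lem:consistent}, so the step here is merely a matter of combining Lemmas \ref{lem:consistent}, \ref{lem:lindenbaum} and \ref{lem:subset}. Everything else is a direct unwinding of the definitions of the powerset lattice and of maximally consistent sets.
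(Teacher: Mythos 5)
Your proof is correct and follows essentially the same route as the paper's: the routine components are verified directly from the powerset structure, upward closure of $\s^c$ comes from the atomic clause (the paper cites Corollary \ref{cor:mon:canonical}), and seriality of $R^c$ is obtained by combining Lemmas \ref{lem:consistent}, \ref{lem:lindenbaum} and \ref{lem:subset}. The only difference is that you spell out the seriality argument explicitly (including the transitivity step placing the Lindenbaum extension inside $[\Gamma_0]_\Nec$), which the paper leaves implicit.
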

\begin{proof}
Obviously functions and operations $\s^c, \oplus^c, f^c, g^c$ are well-defined. Since $\oplus^c$ is defined as $\cup$ on $\cP(\Prop)$, it is idempotent, commutative, associative, and it satisfies unrestricted fusion. By Corollary \ref{cor:mon:canonical}, we know that $\s^c$ is upward closed. Seriality of $R^c$ follows from Lemmas \ref{lem:consistent}, \ref{lem:lindenbaum}, \ref{lem:subset}.
\end{proof}

\begin{lemma}[Existence lemma]
Given a mcs $\Gamma_0$, the canonical model $\cM^c=\CModel$ for $\Gamma_0$, a world $\Delta \in [\Gamma_0]_\Nec$ and  a formula $\varphi\in \LangCPL$, if $\neg \I \varphi \in \Delta$, then either $\I \overline{\varphi} \not\in \Delta$ or there exists  a world $\Gamma \in [\Gamma_0]_\Nec$, such that $\Delta R^c \Gamma$ and $\neg \varphi \in \Gamma$.
\label{lemm:exists_lemma}
\end{lemma}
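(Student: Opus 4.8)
The plan is to run the standard existence/diamond lemma argument, but carefully splitting into cases according to whether the "topic-part" condition $\I\overline{\varphi}\in\Delta$ already fails. So suppose $\neg\I\varphi\in\Delta$ and assume $\I\overline{\varphi}\in\Delta$ (otherwise the first disjunct holds and we are done). It then suffices to find a $\Gamma\in[\Gamma_0]_\Nec$ with $\Delta R^c\Gamma$ and $\neg\varphi\in\Gamma$. By the definition of $R^c$, it is enough to find a maximally $\Log$-consistent $\Gamma$ with $\Delta[\I]\subseteq\Gamma$, $\neg\varphi\in\Gamma$, and $\Gamma\in[\Gamma_0]_\Nec$; the last requirement will come for free, since by Lemma \ref{lem:subset} (applied in the canonical construction) any mcs extending $\Delta[\I]$ lies in $[\Delta]_\Nec=[\Gamma_0]_\Nec$.

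First I would show that $\Delta[\I]\cup\{\neg\varphi\}$ is $\Log$-consistent. Suppose not. Then, as in Lemma \ref{lem:consistent}, there is a finite $\Phi\subseteq\Delta[\I]$ with $\vdash\bigwedge\Phi\imp\varphi$, and by Lemma \ref{lem:conjunction} we may take $\bigwedge\Phi$ itself to be a single element $\chi\in\Delta[\I]$, so there is $\psi\in\LangCPL$ with $\I\psi\in\Delta$ and $\Nec(\psi\imp\chi)\in\Delta$. From $\vdash\chi\imp\varphi$ and $\mathsf{S5}_\Nec$ we get $\Nec(\psi\imp\varphi)\in\Delta$. Now the crucial point: we have assumed $\I\overline{\varphi}\in\Delta$. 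Hence the premises of $\mathsf{Ax5}$, namely $\Nec(\psi\imp\varphi)\in\Delta$, $\I\psi\in\Delta$, and $\I\overline{\varphi}\in\Delta$, are all in $\Delta$, so $\I\varphi\in\Delta$, contradicting $\neg\I\varphi\in\Delta$ by Lemma \ref{lem:prop.mcs:Log}.\ref{lem:prop.mcs:L.4}. Therefore $\Delta[\I]\cup\{\neg\varphi\}$ is consistent, and by Lindenbaum's Lemma (Lemma \ref{lem:lindenbaum}) it extends to an mcs $\Gamma$. Then $\Delta[\I]\subseteq\Gamma$ gives $\Delta\rightarrow^c\Gamma$, hence $\Delta\sim_\Nec\Gamma$ by Lemma \ref{lem:subset}, so $\Gamma\in[\Gamma_0]_\Nec$ and $\Delta R^c\Gamma$; and $\neg\varphi\in\Gamma$ by construction.

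The only subtlety — and the part that distinguishes this from the textbook modal existence lemma — is precisely the need to feed $\I\overline{\varphi}$ into $\mathsf{Ax5}$; this is why the statement is an "either/or" rather than the naive "$\neg\I\varphi$ yields a witness world". One should double-check the reduction via Lemma \ref{lem:conjunction}: strictly speaking $\bigwedge\Phi\in\Delta[\I]$ gives a $\psi$ with $\I\psi\in\Delta$ and $\Nec(\psi\imp\bigwedge\Phi)\in\Delta$, and then $\vdash\bigwedge\Phi\imp\varphi$ plus $\mathsf{S5}_\Nec$ delivers $\Nec(\psi\imp\varphi)\in\Delta$ as needed; no use of $\mathsf{Ax3}$ is required here, unlike in Lemma \ref{lem:consistent}. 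I do not expect any real obstacle beyond bookkeeping, since all the heavy lifting (consistency of $\Gamma[\I]$, the $\oplus^c/\s^c/f^c$ machinery) is already in place; the lemma is essentially a localized re-run of Lemma \ref{lem:consistent} with $\neg\varphi$ added to the set and $\mathsf{Ax5}$ invoked in the contrapositive direction.
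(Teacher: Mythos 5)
Your proof is correct and follows essentially the same route as the paper's: assume $\I\overline{\varphi}\in\Delta$, show $\Delta[\I]\cup\{\neg\varphi\}$ is consistent by reducing a putative derivation of $\varphi$ to a single $\chi\in\Delta[\I]$ via Lemma \ref{lem:conjunction}, apply $\mathsf{S5}_\Nec$ and $\mathsf{Ax5}$ with $\I\overline{\varphi}$ to contradict $\neg\I\varphi\in\Delta$, then extend by Lindenbaum and use Lemma \ref{lem:subset} to place the witness in $[\Gamma_0]_\Nec$. No gaps; the observation that $\mathsf{Ax3}$ is not needed here is also accurate.
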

\begin{proof}
Suppose $\neg \I \varphi, \I \overline{\varphi} \in \Delta$. Since $\Delta$ is maximal and consistent, $\I \varphi \not\in \Delta$. Then, consider the set $\Delta' = \{ \neg \varphi\} \cup \Delta[\I]$. This set is consistent. To show this claim, assume otherwise. Then, $ \Delta[\I] \vdash \varphi$. In other words, there exists a finitely many formulae $\psi_1, \ldots, \psi_n\in\Delta[\I]$ such that $\psi_1, \ldots, \psi_n\vdash \varphi$.  This implies that $\Nec(\bigwedge_{1\leq i \leq n} \psi_i\imp \varphi)\in \Delta$. Moreover, by Lemma \ref{lem:conjunction}, we also have that $\bigwedge_{1\leq i \leq n} \psi_i\in \Delta[I]$. Thus, there is $\chi\in\LangCPL$ such that $\I\chi\wedge \Nec(\chi\imp \bigwedge_{1\leq i \leq n} \psi_i)\in \Delta$. Then, by $\mathsf{S5}_\Nec$, we have  $\I\chi\wedge \Nec(\chi \imp \varphi)\in \Delta$. This, together with $\I\overline{\varphi}$ and $\mathsf{Ax5}$, implies that $\I\varphi\in \Delta$, contradicting consistency of $\Delta$. Therefore, $\Delta' = \{ \neg \varphi\} \cup \Delta[\I]$ is consistent. By Lemma \ref{lem:lindenbaum}, it can be extended to a msc $\Gamma$. Obviously $\neg \varphi\in \Gamma$. Since $\Delta[\I] \subseteq \Gamma$ and $\Gamma \in [\Gamma_0]_\Nec$ (by Lemma \ref{lem:subset}), we also have $\Delta R^c \Gamma$. 
\end{proof}

\begin{lemma}\label{lem:func}
Given a mcs $\Gamma_0$, the canonical model $\cM^c=\CModel$ for $\Gamma_0$, a world $\Delta \in [\Gamma_0]_\Nec$ and  a formula $\varphi\in \LangCPL$, $\I\overline{\varphi}\in \Delta$ iff $Var(\varphi)\subseteq f^c(\Delta)$.
\end{lemma}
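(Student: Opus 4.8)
The plan is to unfold both sides of the biconditional using the definition $f^c(\Delta)=\{p\in\Prop : \I\overline{p}\in\Delta\}$, so that the claim to be proved becomes: $\I\overline{\varphi}\in\Delta$ iff $\I\overline{p}\in\Delta$ for every $p\in Var(\varphi)$. Both directions then reduce to pushing $\I$ across the finite conjunction $\overline{\varphi}=\bigwedge_{p\in Var(\varphi)}(p\vee\neg p)$ of the tautologies $\overline{p}$, for which the relevant tools are $\mathsf{Ax4}$ (agglomeration), $\mathsf{Ax1}$, and the already-established Lemmas~\ref{lem:derivable}.\ref{lem:derivable.1} and \ref{lem:derivable}.\ref{lem:derivable.2}, together with the standard mcs facts of Lemma~\ref{lem:prop.mcs:Log}.

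For the left-to-right direction, assume $\I\overline{\varphi}\in\Delta$ and fix an arbitrary $p\in Var(\varphi)$. Since $Var(p)=\{p\}\subseteq Var(\varphi)$, Lemma~\ref{lem:derivable}.\ref{lem:derivable.2} gives $\vdash \I\overline{\varphi}\imp\I\overline{p}$; hence $\I\overline{\varphi}\imp\I\overline{p}\in\Delta$ by Lemma~\ref{lem:prop.mcs:Log}.\ref{lem:prop.mcs:L.2}, and modus ponens (Lemma~\ref{lem:prop.mcs:Log}.\ref{lem:prop.mcs:L.1}) yields $\I\overline{p}\in\Delta$, i.e.\ $p\in f^c(\Delta)$. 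As $p$ was arbitrary, $Var(\varphi)\subseteq f^c(\Delta)$.

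For the right-to-left direction, assume $Var(\varphi)\subseteq f^c(\Delta)$. If $Var(\varphi)=\emptyset$, then $\overline{\varphi}$ is the empty conjunction, i.e.\ $\top$, and $\I\overline{\varphi}=\I\top\in\Delta$ by $\mathsf{Ax1}$. Otherwise enumerate $Var(\varphi)=\{p_1,\dots,p_n\}$ in the order fixed by the convention defining $\overline{\cdot}$; by hypothesis $\I\overline{p_i}\in\Delta$ for each $i\leq n$, so $\bigwedge_{1\leq i\leq n}\I\overline{p_i}\in\Delta$ by Lemma~\ref{lem:prop.mcs:Log}.\ref{lem:prop.mcs:L.3}. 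Iterating $\mathsf{Ax4}$ together with propositional reasoning then gives $\I\bigl(\bigwedge_{1\leq i\leq n}\overline{p_i}\bigr)=\I\bigl(\bigwedge_{p\in Var(\varphi)}\overline{p}\bigr)\in\Delta$, and Lemma~\ref{lem:derivable}.\ref{lem:derivable.1} delivers $\I\overline{\varphi}\in\Delta$.

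I do not anticipate a real obstacle here. The only points needing any care are the degenerate case $Var(\varphi)=\emptyset$, which is dispatched by $\mathsf{Ax1}$ and the empty-conjunction convention, and the bookkeeping of how $\mathsf{Ax4}$ is applied, since associativity and ordering of the conjuncts $\overline{p_i}$ matter — this is precisely why the final step is routed through Lemma~\ref{lem:derivable}.\ref{lem:derivable.1} rather than by hand, as $\I$ is hyperintensional and provable equivalence alone would not license rewriting within its scope.
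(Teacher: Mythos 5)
Your proof is correct and follows essentially the same route as the paper's: both directions reduce $\I\overline{\varphi}$ to the family $\{\I\overline{p} : p\in Var(\varphi)\}$ via $\mathsf{Ax4}$ and Lemma~\ref{lem:derivable}.\ref{lem:derivable.1} (your use of Lemma~\ref{lem:derivable}.\ref{lem:derivable.2} for the forward direction is just a packaged form of the same step), and then unfold the definition of $f^c$. Your explicit treatment of the degenerate case $Var(\varphi)=\emptyset$ via $\mathsf{Ax1}$ is a small point the paper's one-line chain of equivalences leaves implicit, but it is consistent with the paper's conventions.
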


\begin{proof}\label{lem:f}
$\I\overline{\varphi}\in \Delta$ iff I $(\bigwedge_{p\in Var(\varphi)}\overline{p})\in \Delta$ (by $\mathsf{Ax2}$) iff $\I\overline{p}\in  \Delta$ for all $p\in Var(\varphi)$ (by $\mathsf{Ax4}$) iff $p\in f^c(\Delta)$ for all $p\in Var(\varphi)$ (by the defn. of $f^c)$ iff $Var(\varphi) \subseteq f^c(\Delta)$.
\end{proof}

\begin{lemma}[Truth lemma]\label{lemm:truth_lemma_for_problem_sensitive_logic}
Let $\Gamma_0$ be a mcs of $\Log$ and $\cM^c=\CModel$ be the canonical model for $\Gamma_0$.  Then, for all $\varphi\in\LangInt$ and $\Delta\in [\Gamma_0]_\Nec$,  we have $\cM^c, \Delta \models \varphi \mbox{ iff } \varphi\in \Delta$.
\end{lemma}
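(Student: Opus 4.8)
The plan is to prove the statement by induction on the structure of $\varphi\in\LangInt$, with the induction hypothesis (IH) that the biconditional holds for all proper subformulas of $\varphi$ at every $\Delta\in[\Gamma_0]_\Nec$. The atomic case $\varphi:=p$ is immediate from the definition of $V^c$, and the Boolean cases $\neg\psi,\ \psi\lor\chi,\ \psi\land\chi$ are the usual routine arguments from the IH together with the mcs-closure facts in Lemma \ref{lem:prop.mcs:Log}. For $\varphi:=\Nec\psi$: because $\Nec$ is interpreted globally, $\cM^c,\Delta\models\Nec\psi$ iff $\cM^c,\Gamma\models\psi$ for every world $\Gamma$ of $\cM^c$, i.e. every $\Gamma\in[\Gamma_0]_\Nec$; by IH this is equivalent to $\psi\in\Gamma$ for all $\Gamma\in[\Gamma_0]_\Nec$, and this in turn is equivalent to $\Nec\psi\in\Delta$. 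Right-to-left is immediate since any $\Gamma\in[\Gamma_0]_\Nec$ satisfies $\Delta\sim_\Nec\Gamma$, so $\psi\in\Delta[\Nec]\subseteq\Gamma$; left-to-right is the standard $\mathsf{S5}$ existence argument: if $\Nec\psi\notin\Delta$, then $\{\neg\psi\}\cup\Delta[\Nec]$ is $\Log$-consistent by $\mathsf{S5}_\Nec$, hence extends by Lemma \ref{lem:lindenbaum} to an mcs $\Gamma$ with $\Delta[\Nec]\subseteq\Gamma$, so $\Gamma\sim_\Nec\Delta$, whence $\Gamma\in[\Gamma_0]_\Nec$ and $\psi\notin\Gamma$.

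All the real content sits in the case $\varphi:=\I\alpha$ with $\alpha\in\LangCPL$. Unfolding the semantic clause, $\cM^c,\Delta\models\I\alpha$ iff $R^c(\Delta)\subseteq\br{\alpha}_{\cM^c}$ and $f^c(\Delta)\in\s^c(\alpha)$. By Lemma \ref{lem:var} the second conjunct is $Var(\alpha)\subseteq f^c(\Delta)$, which by Lemma \ref{lem:func} is equivalent to $\I\overline{\alpha}\in\Delta$; and by the IH applied to $\alpha$ (a subformula of $\I\alpha$), the first conjunct says exactly that $\alpha\in\Gamma$ for every $\Gamma\in[\Gamma_0]_\Nec$ with $\Delta R^c\Gamma$. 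So the goal reduces to showing: $\I\alpha\in\Delta$ iff $\I\overline{\alpha}\in\Delta$ and $\alpha\in\Gamma$ for every $R^c$-successor $\Gamma$ of $\Delta$.

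For the forward direction, assume $\I\alpha\in\Delta$. Then $\I\overline{\alpha}\in\Delta$ by $\mathsf{Ax2}$; and since $\vdash\Nec(\alpha\imp\alpha)$, the witness $\psi:=\alpha$ gives $\I\alpha\wedge\Nec(\alpha\imp\alpha)\in\Delta$, so $\alpha\in\Delta[\I]$, and hence $\alpha\in\Gamma$ for every $\Gamma\supseteq\Delta[\I]$, i.e. every $R^c$-successor. For the converse I argue contrapositively: if $\neg\I\alpha\in\Delta$, the Existence Lemma (Lemma \ref{lemm:exists_lemma}) yields that either $\I\overline{\alpha}\notin\Delta$, or there is $\Gamma\in[\Gamma_0]_\Nec$ with $\Delta R^c\Gamma$ and $\neg\alpha\in\Gamma$ (so $\alpha\notin\Gamma$); in either case one of the two required conjuncts fails, so $\cM^c,\Delta\not\models\I\alpha$.

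The main obstacle is precisely this backward direction of the $\I\alpha$ case, and it has been packaged into the Existence Lemma, whose proof is where $\mathsf{Ax5}$ does the crucial work: showing $\{\neg\alpha\}\cup\Delta[\I]$ consistent whenever $\neg\I\alpha,\I\overline{\alpha}\in\Delta$ requires combining agglomeration of $\Delta[\I]$ (Lemma \ref{lem:conjunction}), closure of $\Nec$ under provable implication, and the restricted-closure axiom $\mathsf{Ax5}$ to contradict $\Delta[\I]\vdash\alpha$. Once the Truth Lemma is in place, Theorem \ref{thm:comp} follows in the usual way: soundness is a direct check that each axiom of Table \ref{tbl:ax:plainbelief2} is valid on problem-sensitive models and each rule preserves validity, and for strong completeness a $\Log$-consistent set $\Sigma$ extends by Lemma \ref{lem:lindenbaum} to an mcs $\Gamma_0$, the canonical model $\cM^c$ for $\Gamma_0$ is a problem-sensitive model by Lemma \ref{lem:canonicalmodel}, and by the Truth Lemma $\cM^c,\Gamma_0\models\chi$ for every $\chi\in\Gamma_0\supseteq\Sigma$.
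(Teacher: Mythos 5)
Your proof is correct and follows essentially the same route as the paper's: the $\I\alpha$ case is handled with the identical combination of $\mathsf{Ax2}$, Lemmas \ref{lem:var} and \ref{lem:func} to translate $f^c(\Delta)\in\s^c(\alpha)$ into $\I\overline{\alpha}\in\Delta$, the witness $\Nec(\alpha\imp\alpha)$ to put $\alpha$ into $\Delta[\I]$ for the left-to-right inclusion on successors, and the Existence Lemma \ref{lemm:exists_lemma} for the converse. The only difference is presentational — you spell out the $\Nec$ case that the paper leaves as routine — so nothing further is needed.
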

\begin{proof}
By induction on $\varphi$. Cases of propositional atoms, Boolean connectives and $\Nec$ are proven as usual.

\smallskip

\noindent Case of $\varphi: = \I \psi$.

\smallskip

\noindent ($\Leftarrow$) Assume $\I \psi \in \Delta$. Then, by $\mathsf{Ax2}$, $\I\overline{\psi}\in \Delta$. This means, by Lemma \ref{lem:func} that  $ Var(\psi) \subseteq f^c(\Delta)$. By Lemma \ref{lem:var}, this means that $f^c(\Delta) \in \s^c(\varphi)$. 
Now let $\Gamma \in [\Gamma_0]_\Nec$ such that $\Delta R^c\Gamma$, i.e., that $\Delta[\I]\subseteq \Gamma$. Observe that $\I\psi\in \Delta$ implies that $\psi\in \Delta[\I]$ (since $\Nec(\psi\imp \psi)\in \Delta$, by $\mathsf{S5}_\Nec$). By $\Delta[\I]\subseteq \Gamma$, this implies that $\psi\in \Gamma$. Then, by IH, we obtain that $\cM^c, \Gamma\models \psi$. As $\Gamma$ has been chosen arbitrarily, we obtain that $R^c(\Delta)\subseteq \br{\psi}_{\cM^c}$. Putting everything together, we conclude $\cM^c, \Delta  \models \I\psi$.

\smallskip

\noindent ($\Rightarrow$) Assume $\cM^c, \Delta \models \I\psi$. Then, by the semantics,  (1) $f^c(\Delta) \in \s^c(\psi)$ and (2) for all $\Gamma \in [\Gamma_0]_\Nec$ such that $\Delta R^c \Gamma$, $\cM^c, \Gamma \models \psi$. By (1) and Lemma \ref{lem:var}, we know that $Var(\psi)\subseteq f^c(\Delta)$. This implies, by Lemma \ref{lem:func}, that $\I\overline\psi\in \Delta$. Now, toward contradiction, assume $\I \psi \not \in \Delta$, i.e. $\neg \I \psi \in \Delta$. By Lemma \ref{lemm:exists_lemma}, there exists a world $\Gamma \in [\Gamma_0]_\Nec$, such that $\Delta R^c \Gamma$ and $\neg \psi \in \Gamma$. By IH, we have $\cM^c, \Gamma \models \neg \psi$, implying that $\cM^c, \Delta \not\models \I \psi$, which contradicts our initial assumption. Hence, $\I \psi \in \Delta$. 
\end{proof}
\begin{theorem}[Soundness and completeness]

Given any set of formulae $\Delta \subseteq \cL$ and any formula $\varphi \in\cL$, the following holds: $ \Gamma \vdash \varphi \mbox{ iff } \Gamma \models \varphi$.
\label{thm:soundness_and_completeness}
\end{theorem}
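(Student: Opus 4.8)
The plan is to prove the two directions of the biconditional separately. The direction $\Gamma\vdash\varphi\Rightarrow\Gamma\models\varphi$ is soundness: I would check that each axiom in Table~\ref{tbl:ax:plainbelief2} is valid on the class of all problem-sensitive models and that the rules preserve validity, from which soundness for arbitrary $\Gamma$ follows by the finiteness of derivations. The clauses (CPL) and ($\mathsf{S5}_\Nec$) are routine because $\Nec$ is the universal modality; $\mathsf{Ax1}$ is immediate from $\s(\top)=P$; $\mathsf{Ax3}$ follows from seriality of $R$ together with $\br{\varphi}_\cM\cap\br{\neg\varphi}_\cM=\emptyset$; and $\mathsf{Ax4}$ combines the Boolean behaviour of $\br{\cdot}_\cM$ with idempotence of $\oplus$ (for $\Rightarrow$) and Lemma~\ref{lemma:MON}, via $a,b\leq a\oplus b$ (for $\Leftarrow$). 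The only point needing an auxiliary fact is the role of $\overline{\varphi}$ in $\mathsf{Ax2}$ and $\mathsf{Ax5}$: I would first show by induction on $\varphi\in\LangCPL$, using Lemma~\ref{prop:ext_part_conj_disj} and idempotence of $\oplus$, that $\s(\varphi)=\bigcap_{p\in Var(\varphi)}\s(p)=\s(\overline{\varphi})$. Since $\overline{\varphi}$ is a tautology, $\br{\overline{\varphi}}_\cM=W$, so this equality yields $\mathsf{Ax2}$ (the inclusion $\s(\varphi)\subseteq\s(\overline{\varphi})$) and $\mathsf{Ax5}$ (the inclusion $\s(\overline{\varphi})\subseteq\s(\varphi)$, plus $\br{\psi}_\cM\subseteq\br{\varphi}_\cM$ from its antecedent) at once.

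For the direction $\Gamma\models\varphi\Rightarrow\Gamma\vdash\varphi$, I reduce, as usual, to the claim that every $\Log$-consistent set is satisfiable in a problem-sensitive model. Given such a $\Gamma$, extend it by Lemma~\ref{lem:lindenbaum} to a mcs $\Gamma_0$, form the canonical model $\cM^c$ for $\Gamma_0$, which is a genuine problem-sensitive model by Lemma~\ref{lem:canonicalmodel}, and apply the Truth Lemma (Lemma~\ref{lemm:truth_lemma_for_problem_sensitive_logic}) at $\Gamma_0$ — which lies in $[\Gamma_0]_\Nec$ by reflexivity of $\sim_\Nec$ — to obtain $\cM^c,\Gamma_0\models\Gamma$. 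Hence every consistent set is satisfiable, and the target implication follows contrapositively: if $\Gamma\not\vdash\varphi$ then $\Gamma\cup\{\neg\varphi\}$ is $\Log$-consistent, so satisfiable at some point of a problem-sensitive model, giving $\Gamma\not\models\varphi$.

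The substantive work is already isolated in the preceding lemmas; the theorem itself only stitches them together, and I expect the genuine obstacle to lie there rather than in the assembly. The two delicate points are (i) consistency of $\Gamma[\I]$ (Lemma~\ref{lem:consistent}), which underlies seriality of $R^c$ and rests on a careful interplay of $\mathsf{Ax2}$, $\mathsf{Ax3}$ and $\mathsf{Ax5}$ with closure of $\Gamma[\I]$ under conjunction (Lemma~\ref{lem:conjunction}); and (ii) the $\I\psi$-case of the Truth Lemma. For (ii), the $\Leftarrow$ direction chains $\I\psi\in\Delta\Rightarrow\I\overline{\psi}\in\Delta$ (by $\mathsf{Ax2}$) $\Rightarrow Var(\psi)\subseteq f^c(\Delta)$ (Lemma~\ref{lem:func}) $\Rightarrow f^c(\Delta)\in\s^c(\psi)$ (Lemma~\ref{lem:var}), together with $\psi\in\Delta[\I]$ forcing every $R^c$-successor of $\Delta$ to satisfy $\psi$ via the induction hypothesis; the $\Rightarrow$ direction calls on the Existence Lemma (Lemma~\ref{lemm:exists_lemma}) — whose proof is where $\mathsf{Ax5}$ does the real work — to produce a successor falsifying $\psi$ when $\I\psi\notin\Delta$, contradicting $\cM^c,\Delta\models\I\psi$. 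The propositional and $\Nec$ cases of the Truth Lemma are standard, the latter using the $\mathsf{S5}$ existence argument inside the cluster $[\Gamma_0]_\Nec$. With all this in hand, soundness together with the Truth Lemma yields the theorem.
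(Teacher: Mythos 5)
Your proposal is correct and follows essentially the same route as the paper: soundness by a direct check of the axioms (which the paper dismisses as routine, and for which your auxiliary identity $\s(\varphi)=\bigcap_{p\in Var(\varphi)}\s(p)=\s(\overline{\varphi})$ is indeed the right fact, following from Lemma~\ref{prop:ext_part_conj_disj} and Lemma~\ref{lemma:MON}), and completeness by contraposition via Lindenbaum, the canonical model, and the Truth Lemma, exactly as in Appendix~\ref{app:1}. You also correctly locate the genuine work in Lemmas~\ref{lem:consistent} and~\ref{lemm:exists_lemma}, so nothing is missing.
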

\begin{proof}
    ($\Rightarrow$) By  a routine check of axiom validities and validity preservation under the rules of inference of $\Log$. ($\Leftarrow$) By contraposition. Assume $\Gamma \not\vdash \varphi$. Then, $\Gamma \cup \{ \neg \varphi\}$ is $\Log$-consistent. By Lemma \ref{lem:lindenbaum}, there exists a maximal consistent set $\Gamma_0 \subseteq \cL$, such that $\Gamma\cup \{ \neg \varphi\} \subseteq \Gamma_0$. Then, we can build a canonical model $\mathcal{M}^c =\CModel$. Since $\Gamma \cup \{\neg \varphi\} \subseteq \Gamma_0$, by Lemma \ref{lemm:truth_lemma_for_problem_sensitive_logic}, $\mathcal{M}^c, \Gamma_0 \models \Gamma \cup \{ \neg \varphi\}$.  By Lemma \ref{lem:canonicalmodel}, $\mathcal{M}^c$ is a problem-sensitive model. Hence, there exists a pointed problem-sensitive model, where $\Gamma$ is true and $\varphi$ is false,  i.e. $\Gamma \not\models \varphi$. 
\end{proof}
\end{document}